\documentclass{article}
\usepackage{fullpage}

\usepackage{times}
\usepackage[utf8]{inputenc}

\usepackage{amsmath}
\usepackage{amsthm}
\usepackage{amssymb}
\usepackage{nicefrac}

\usepackage[colorlinks=true, allcolors=blue]{hyperref}

\usepackage{graphicx}
\usepackage{caption}
\usepackage{tikz}
\usetikzlibrary{arrows.meta,positioning,shadows.blur,calc,fit,backgrounds}

\usepackage{booktabs}

\usepackage[ruled,linesnumbered]{algorithm2e}
\usepackage[nameinlink]{cleveref}
\crefname{AlgoLine}{line}{lines}
\Crefname{AlgoLine}{Line}{Lines}

\usepackage{soul}
\usepackage{url}
\usepackage[dvipsnames]{xcolor}

\tikzset{
treenode/.style = {circle, draw=black, inner sep=2pt, minimum size=6mm, font=\small},
snode/.style = {rectangle, draw=black, rounded corners, inner sep=4pt, font=\small},
el/.style = {edge from parent/.style={draw, -latex}}
}

\usepackage{xfrac}
\usepackage{booktabs,multirow}
\usepackage{amsthm,amsmath,amssymb,amsfonts}
\usepackage{mathtools,thmtools}
\usepackage{thm-restate}
\usepackage{cleveref}
\usepackage[dvipsnames]{xcolor}
\usepackage{comment}
\usepackage{todonotes}
\usepackage{natbib}
\usepackage{tikz}
\usetikzlibrary{arrows.meta}
\usetikzlibrary{positioning}
\usetikzlibrary{calc}
\usepackage{xspace}
\usepackage{nicefrac}
\usepackage[shortlabels]{enumitem}
\usepackage{pdfpages}

\newtheorem{lemma}{Lemma}
\theoremstyle{definition}
\newtheorem{definition}{Definition}

\newtheorem{open}{Open Question}
\usepackage{tcolorbox}

\makeatletter
\newcommand{\leqnomode}{\tagsleft@true}
\newcommand{\reqnomode}{\tagsleft@false}
\makeatother

\renewcommand{\paragraph}[1]{\smallskip\noindent\textbf{#1}}

\renewcommand{\le}{\leqslant}
\renewcommand{\leq}{\leqslant}
\renewcommand{\ge}{\geqslant}
\renewcommand{\geq}{\geqslant}

\newcommand{\bbR}{\mathbb{R}}
\newcommand{\bbN}{\mathbb{N}}

\DeclarePairedDelimiter\ceil{\lceil}{\rceil}
\DeclarePairedDelimiter\set{\{}{\}}           

\renewcommand{\vec}[1]{#1}
\newcommand{\vals}{\vec{v}}

\newcommand{\calI}{\mathcal{I}} 
\newcommand{\calM}{\mathcal{M}} 
\newcommand{\calL}{\mathcal{L}} 
\newcommand{\calC}{\mathcal{C}} 
\renewcommand{\top}{\operatorname{top}}

\renewcommand{\cite}{\citep}

\title{Constrained Fair Allocations via Partition Matroid Reductions}
\author{Benjamin Cookson \and Nisarg Shah}
\date{University of Toronto\\\{bcookson, nisarg\}@cs.toronto.edu}

\begin{document}

\maketitle

\begin{abstract}
We study fair allocation of indivisible goods under additive valuations and matroid constraints. A challenging open question is whether a complete and feasible envy-free up to one good (EF1) allocation exists under every matroid that admits a complete and feasible allocation. The state-of-the-art result by \citet{biswas2018fair} positively resolves this question for partition matroids.

Our first result positively resolves it for laminar matroids, which generalize partition matroids, when there are three agents. Our technique reduces this general existence question to finding an EF1 allocation satisfying a mild additional condition under a single finite-sized key laminar matroid, and we establish the required allocation by case analysis. We show that our technique somewhat extends to four agents, reducing the analogous problem to finding EF1 allocations under two finite-sized laminar matroids, although we are unable to establish their existence. We also use recent matroid decomposition results to establish EF1 existence under broader classes of matroids. Specifically, we show that EF1 allocations always exist under transversal matroids whenever a complete allocation is feasible, and obtain existence results for graphic matroids and gammoids under stronger assumptions.
\end{abstract}

\section{Introduction}

Fair allocation of indivisible goods has long been a central problem in economics and computer science. In its most basic form, it asks how to partition a set of $m$ indivisible goods among $n$ agents with additive valuations, i.e., when their value for a set of goods is their total value for the individual goods in that set.

While this basic model offers significant algorithmic insights on fairness, it suffers from limited direct applicability as many real-world settings impose additional structure on the desired allocation. For example, a museum may wish to allocate exhibit items between its several branches, but must additionally ensure that no branch receives too many items from any single category (e.g., paintings or sculptures). Mathematically, this is depicted as a \emph{partition matroid constraint}, where the bundle allocated to each branch (agent) must be an independent set of a partition matroid over the exhibit items (goods) induced by the categories and their upper limits. Some constraints are more complex. When a hospital assigns shifts to nurses, it may wish to not only limit the number of shifts on any given day of the week, but also the total number of shifts allocated in the entire week. Similarly, when a conference matches reviewers to submissions, it may model having multiple reviewers per submission by creating copies of the submission but enforcing that each reviewer gets at most one copy of each submission, in addition to limiting the total review load. These examples can be viewed as \emph{laminar matroid constraints}, which generalize partition matroids by imposing upper bounds with respect to a laminar family of sets in which any two sets are either disjoint or one is contained in the other. Assigning courses to students at a university can involve additional layers as students face total course load limits, departmental caps, and sometimes fine-grained constraints on courses from different sub-fields, yielding even richer laminar structures.

A substantial literature studies fair allocation under such constraints. A prominent open question is a strikingly simple one: \emph{Does a complete and feasible envy-free up to one good (EF1) allocation always exist subject to matroid constraints that admit a complete and feasible allocation?} EF1 is arguably the most compelling fairness criterion for allocating indivisible goods~\cite{LMMS04,Bud11,CKMP+19}, which demands that no agent strictly prefer the bundle assigned to another agent with her favorite good removed to her own bundle. This criterion is surprisingly flexible: it can be attained in conjunction with Pareto optimality~\cite{CKMP+19} and $\nicefrac{2}{3}$-maximin share fairness~\cite{akrami2025achieving}, allows achieving exact envy-freeness in expectation by randomizing over EF1 allocations~\cite{AFSV23}, and continues to exist under monotone valuations~\cite{LMMS04} or when allocating (undesirable) chores together with goods~\cite{aziz2022fair}. Therefore, it is striking that its existence remains a major unresolved question in the literature on constrained fair division.

Progress on this question has so far been limited, with the state-of-the-art result by \citet{biswas2018fair} proving that EF1 allocations exist under partition matroid constraints and additive valuations. See \Cref{sec:related} for an overview of further related work. In this paper, we make progress on this fundamental problem by proving several new existence results for laminar (and broader classes of) matroid constraints using a technique called \emph{partition matroid reductions}.

\subsection{Our Contributions}\label{sec:results}

In this work, we establish new EF1 guarantees for several classes of matroid constraints. These results all revolve around a common technique, in which we take a complex matroid and reduce it to a simple matroid where the desired guarantee is already known (such as a partition matroid) or is quite simple to prove.

Our first result, presented in \Cref{sec:section3}, positively resolves this question for three agents under laminar matroid constraints. Specifically, we show that whenever a laminar matroid admits a complete and feasible allocation to three agents, it also admits a complete and feasible EF1 allocation to three agents. We demonstrate that finding such an allocation under any laminar matroid reduces to establishing a slightly stronger condition than EF1 under a specific, finite-sized \emph{key laminar matroid}. We are able to prove such a condition for this key matroid, and we show how this result combines with the algorithm of \citet{biswas2018fair} to yield a polynomial-time solution for this problem.

Additionally, in \Cref{app:four-agents}, we investigate how our techniques extend to instances with laminar matroid constraints and more than $3$ agents. We show that our decomposition technique extends to the four-agent setting, effectively reducing the general problem to finding EF1 allocations, together with an additional condition on envy between agents, under two specific finite-sized laminar matroids. However, resolving these base cases remains an open challenge. In fact, we show that the current techniques we used for solving the key matroids in the three-agent case are insufficient for $n=4$.

Finally, we examine fair allocation under several other classes of matroid constraints. We leverage recent matroid theory results from \citet{berczi2021list}, which imply that certain classes of matroids can always be reduced to partition matroids in such a way that finding an EF1 allocation for the partition matroid yields one for the original matroid. Combining this reduction with the partition-matroid algorithm of \citet{biswas2018fair}, we obtain EF1 allocations for $n$ agents under the following assumptions:

\begin{itemize}
    \item For transversal matroids: the matroid's ground set can be partitioned into $n$ independent sets.
    \item For graphic matroids: the matroid's ground set can be partitioned into $\lfloor(n+1)/2\rfloor$ independent sets.
    \item For gammoids: the matroid's ground set can be partitioned into $\lfloor(n+2)/2\rfloor$ independent sets.
\end{itemize}

For transversal matroids, this result fully resolves the EF1 matroid conjecture for this class: if a complete allocation to $n$ agents is feasible, then an EF1 allocation to those same $n$ agents also exists.

For graphic matroids and gammoids, the result gives a weaker version of the conjecture: for $n$ agents, the matroid must satisfy the stronger colorability assumption stated above rather than merely being $n$-colorable.

\subsection{Related Work}\label{sec:related}

As stated previously, \citet{biswas2018fair} initiated the search for EF1 under matroid constraints. In their original paper, they presented an elegant, polynomial-time algorithm that finds an EF1 allocation subject to partition matroid constraints whenever a complete allocation is feasible. In the same paper, Biswas and Barman showed that, when agents have identical valuations, the same existence guarantee holds for laminar matroid constraints. Later, in~\cite{biswas2019matroid}, the same authors extended the latter result to all base-orderable matroid constraints, a much broader class than laminar matroids.

Going beyond base-orderable matroids when agents have identical valuations, \citet{akrami2025matroids} resolve the matroid equitability conjecture and, as a consequence, show that an EF1 allocation exists when agents have identical tri-valued valuations and a complete allocation is feasible. Further, they show that if Gabow's Conjecture \cite{gabow1976decomposing} (a long-standing conjecture in matroid theory) is true, then the same EF1 guarantee holds under identical valuations subject to any matroid constraint under which a complete allocation is feasible.

When agents have non-identical valuation functions, \citet{cookson2025constrained} show that the allocation that maximizes Nash Welfare achieves a $\nicefrac{1}{2}$ approximation of EF1 while also achieving the efficiency notion of Pareto Optimality. \citet{shoshan2022efficient} also study matroid-constrained allocations that are simultaneously fair and efficient, showing that an EF1 and PO allocation always exists for two agents under any partition matroid constraint under which a complete allocation is feasible.

\citet{dror2023fair} also study EF1 under matroid constraints. Their main results are regarding heterogeneous matroid constraints, where each agent's bundle is constrained by a different matroid. When there is a single matroid constraint for all agents, they show that, when agents have (non-identical) binary valuations, an EF1 allocation always exists for three agents subject to any base-orderable matroid constraint under which a complete allocation is feasible. Particularly relevant to our work, \citet{dror2023fair} also point out that, for any class of matroids in which the desired EF1 guarantee is known under identical preference functions whenever a complete allocation is feasible, the corresponding result for two agents with different preferences follows from a simple cut-and-choose algorithm. This establishes the guarantee for two agents subject to any base-orderable matroid (and thus any laminar matroid) under which a complete allocation is feasible.

Finally, many other classes of constraints other than matroid constraints have been studied in the context of fair division. For a comprehensive overview of this literature, we direct the reader to \cite{suksompong2021constraints}.

\paragraph{Independent Work:} Independently of this work, \citet{equb26} also proved the existence of an EF1 allocation for $3$ agents under any laminar matroid constraint under which a complete allocation is feasible. Our techniques for proving this result differ slightly, but both hinge on identifying the same structural bottlenecks that make laminar matroids hard to solve. Both our algorithms highlight that the problem can essentially be reduced to finding an allocation with certain properties for a very specific laminar-constrained instance with $6$ goods (named the \emph{key matroid} in this paper, and discussed in detail in \Cref{sec:key}). Both this paper and the paper of \citet{equb26} provide essentially the same proof showing that the required allocation over these key matroid instances always exists. Then, with a subroutine for solving key matroid instances in hand, both our algorithm and that of \citet{equb26} rely on modifications of the partition matroid allocation algorithm of \citet{biswas2018fair} that use this subroutine to produce the final fair allocation.

Where the two results differ is in their details and presentation. The algorithm of \citet{equb26} takes a bottom-up approach, starting at the leaves of the laminar family and solving these leaves in a similar fashion to how \citet{biswas2018fair} solve individual partitions in their algorithm. Then, observing that after solving these partitions, any leftover unallocated items will need to be allocated subject to key matroid constraints in order to maintain feasibility, it completes its allocation using the designed subroutine.

In contrast, our algorithm takes a top-down approach. We first systematically decompose the input laminar matroid constraint into a new constraint that is a collection of disjoint uniform matroids and key matroids. We solve this reduced matroid by applying the algorithm of \citet{biswas2018fair} directly on the uniform matroids, and use the new subroutine to resolve the key matroids.

Ultimately, our goal with our laminar matroid algorithm and the subsequent results in \Cref{sec:partition-reductions} is to highlight the usefulness of reducing complex matroid constraints to simple matroid constraints, then finding EF1 allocations subject to the simple constraints. With this in mind, we believe our top-down approach to the laminar problem highlights this, and helps give context to exactly what makes laminar matroid constraints harder than partition matroid constraints. We hope that this perspective can be used to resolve laminar matroids for more than $3$ agents, and to generalize to larger classes of matroids than just the ones discussed in this paper.

\section{Preliminaries}\label{sec:prelims}

For any $r \in \mathbb{N}$, define $[r] := \set{1,2,\ldots,r}$. Let $N = [n]$ be a set of \emph{agents}, and $M$ be a set of $m$ \emph{(indivisible) goods}. Each agent $i$ has a valuation function $v_i : M \to \bbR_{\ge 0}$, where $v_i(g)$ is her value for good $g$. We assume additive valuations: with slight abuse of notation, the value of agent $i$ for a set of goods $S \subseteq M$ is $v_i(S) := \sum_{g \in S} v_i(g)$. Let $\vals = (v_1,\ldots,v_n)$ be the \emph{valuation profile}. In an \emph{allocation} $A = (A_1,\ldots,A_n)$, $A_i \subseteq M$ is the bundle of goods assigned to agent $i$ and $A_i \cap A_j = \emptyset$ for all distinct $i,j \in N$. We say that $A$ is \emph{complete} if $\cup_{i \in N} A_i = M$. The utility to agent $i$ under this allocation is $v_i(A_i)$.

\subsection{Matroid Constraints}

We are interested in \emph{matroid-constrained} fair division, in which we are allowed to assign a bundle to an agent only if the bundle is an independent set of a provided matroid over the set of goods.

Formally, a matroid over the set of goods $M$ is defined as $(M,\calI)$, where $\calI \subseteq 2^M$ satisfies: (a) $\emptyset \in \calI$; (b) \emph{(hereditary property)} if $S \in \calI$ and $S' \subseteq S$, then $S' \in \calI$; and (c) \emph{(exchange property)} if $S,T \in \calI$ and $|T| > |S|$, then there exists $g \in T \setminus S$ such that $S \cup \set{g} \in \calI$.

A matroid $(M,\calI)$ is called $k$-colorable if there exists a partition $(S_1,\ldots,S_k)$ of $M$ such that $S_i \in \calI$ for all $i \in [k]$. An allocation $A$ is called \emph{feasible} if $A_i \in \calI$ for all agents $i \in N$. Hence, in a matroid-constrained fair division instance, a complete and feasible allocation to $n$ agents exists if and only if the matroid is $n$-colorable. Hereinafter, when we refer to an allocation, we mean a complete and feasible allocation unless stated otherwise.

We will refer to the following popular families of matroids in this paper.

\begin{itemize}
    \item \textbf{Uniform matroid:} Given the ground set $M$ and an integer capacity $c \in \bbN$, set $\calI = \set{S \subseteq M : |S| \le c}$. This is one of the simplest matroids.

    \item \textbf{Partition matroid (cardinality constraints):} Let $\mathcal{M} = (M_1,\ldots,M_k)$ be a partition of $M$ into disjoint sets (``categories'') and let $u : \calM \to \bbN$ provide their corresponding capacities. Then, set $\calI = \set{S \subseteq M : |S \cap M_j| \le u(M_j) \text{ for all } j \in [k]}$. Hence, a partition matroid is the direct sum of uniform matroids on its parts.

    \item \textbf{Laminar matroid:} Let $\calL$ be a \emph{laminar family} of subsets of $M$ (i.e., for all $X, Y \in \calL$, $X \subseteq Y$, $Y \subseteq X$, or $X \cap Y = \emptyset$) and $u : \calL \to \bbN$ provide their corresponding capacities. Then, set $\calI = \set{S \subseteq M : |S \cap X| \le u(X) \text{ for all } X \in \calL}$.

    \item \textbf{Transversal matroid:} Let $G$ be an undirected bipartite graph with parts $M$ and $U$. Then, set $\calI$ to be the set of all subsets of $M$ that can be covered by a matching of $G$. Formally, given a set $U$ and its subsets $B_1, \ldots, B_m$, let $\calI$ be the collection of sets $S \subseteq M$ for which there exists an injective function $f: S \to U$ with $f(i) \in B_i$ for all $i \in S$.

    \item \textbf{Gammoid:} Let $D = (V, E)$ be a directed graph and $S, T \subseteq V$ be disjoint sets of source and terminal vertices, respectively. Then, let $\calI$ be the set of all sets $I \subseteq T$ such that there exist $|I|$ vertex-disjoint paths connecting $S$ to $I$.

    \item \textbf{Graphic Matroid:} Let $G = (V, M)$ be an undirected graph (note that the ground set $M$ is now the set of edges). Then, let $\calI$ be the set of all sets $S \subseteq M$ that do not contain a cycle (i.e., are forests).
\end{itemize}

These classes are related as follows:
\begin{itemize}
    \item Uniform $\subset$ Partition $\subset$ Laminar $\subset$ Gammoid;
    \item Uniform $\subset$ Transversal $\subset$ Gammoid.
\end{itemize}

As our main result concerns laminar matroids, let us introduce additional notation for them. Consider the laminar matroid induced by the ground set $M$, a laminar set family $\calL$ consisting of subsets of $M$, and a function $u : \calL \to \bbN$ yielding capacities. We denote it as $(M,\calL,u)$ and write its family of independent sets as $\calI(M,\calL,u)$, dropping the arguments in parentheses when the matroid is clear from the context. For each $S \in \calL$, define two quantities.
\begin{itemize}
    \item Let $D(S) = \set{S' \in \calL : S' \subsetneq S}$. We will refer to these as the \emph{descendants} of $S$.
    \item Let $C(S)$ be the set of inclusion-wise maximal elements of $D(S)$, i.e., the set of those $S' \in D(S)$ for which there is no $S'' \in D(S)$ with $S' \subsetneq S''$. We will refer to these as the \emph{children} of $S$.
\end{itemize}

The reasoning behind this nomenclature will become clear in \Cref{sec:packing}.

Finally, for all these constrained fair allocation problems, we define \emph{fairness} as envy-freeness up to one good (EF1).

\begin{definition}[Envy-freeness up to one good (EF1)]
    An allocation $A$ is \emph{envy-free up to one good} if, for every pair of agents $i,j \in N$, either $A_j = \emptyset$ or $v_i(A_i) \ge v_i(A_j \setminus \set{g})$ for some $g \in A_j$. In words, agent $i$ should not envy agent $j$, after excluding some good from agent $j$'s bundle.
\end{definition}

\section{Laminar Matroids With 3 Agents}\label{sec:section3}

In this section, we prove our main result.

\begin{restatable}{theorem}{main}\label{thm:n=3}
    For $n=3$ agents with additive valuations and laminar matroid constraints, a complete and feasible EF1 allocation exists whenever a complete and feasible allocation exists (i.e., whenever the matroid is $3$-colorable), and such an allocation can be computed in polynomial time.
\end{restatable}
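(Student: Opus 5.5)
The plan is to reduce the laminar instance to a direct sum of simple matroids and solve each piece with an EF1 routine that keeps the pieces compatible. The combining step is the envy-graph / round-robin-per-category algorithm of \citet{biswas2018fair} for partition matroids.

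\textbf{Step 1: normalise the laminar family.} Since the matroid is $3$-colorable, we may assume the family has the following properties.
\begin{itemize}
\item $M \in \calL$.
\item Every capacity satisfies $u(S) \le |S|$.
\item No constraint is redundant. For example, a set $S$ with $u(S) \ge \sum_{S' \in C(S)} u(S') + |S \setminus \bigcup C(S)|$ is dropped.
\end{itemize}
Then $3$-colorability becomes the counting condition $|S| \le 3u(S)$ for every $S \in \calL$. This condition is necessary, and for laminar families it is sufficient, via a greedy top-down packing argument.

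\textbf{Step 2: top-down tightening.} The key observation is that in any complete allocation each set $S$ must give every agent at least $|S| - 2u(S)$ goods, so lower bounds propagate down the tree. Processing sets top-down, we replace each node $S$ with children $S_1,\dots,S_k$ by a partition-matroid-like structure. Whenever $S$ has slack, meaning $\sum_j \min(u(S_j),\cdot)$ exceeds what $S$ allows, we either push the constraint into a child or merge children.

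The aim is to show that, up to isomorphism of the capacity pattern, the only configurations that cannot be flattened into disjoint uniform matroids (categories with caps) are those in which a tight parent sits over tight children in an interlocking way. Arguing by minimal counterexample, these collapse to a single finite gadget: a laminar matroid on $6$ goods, presumably three pairs with per-pair cap $1$ and a total cap of $2$. This is the \emph{key laminar matroid}. The reduced constraint is then a direct sum of uniform matroids and copies of the key matroid. Feasibility of the pieces implies feasibility in the original matroid, because every original constraint is implied by the piece constraints. This implication must be verified along the decomposition.

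\textbf{Step 3: solve the pieces.} Uniform pieces are handled by the \citet{biswas2018fair} procedure. That procedure processes categories one at a time, in a topological order of the envy graph, and between categories removes envy cycles by swapping bundles. Since all categories impose the same constraint on every agent, swaps preserve feasibility.

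For each key-matroid piece we need a feasible allocation of its $6$ goods that is EF1 within the piece and also respects a given priority order $\sigma$ of the agents. Concretely, an agent earlier in $\sigma$ does not envy later agents at all, and each later agent's envy toward earlier ones is removable by one good. This is exactly the invariant the Biswas--Barman induction needs in order to concatenate categories while preserving EF1 globally.

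We establish this by case analysis. Let the first agent in $\sigma$ pick her favourite feasible bundle. The second agent then chooses among the feasible completions, and the third takes the rest. We check that the $3$-colorings of the key matroid are few enough (essentially a perfect matching of pairs across agents) to leave sufficient flexibility, using swaps within pairs when needed.

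\textbf{Step 4: polynomial time.} The decomposition is linear in $|\calL|$, each key piece is constant size, and the Biswas--Barman loop is polynomial.

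\textbf{Main obstacle.} I expect the main obstacle to be Step 2 together with the key-matroid case analysis. Step 2 must prove that every non-flattenable configuration reduces to that one gadget. The case analysis must guarantee the stronger ordered-envy condition rather than plain EF1, since plain EF1 on the gadget does not compose across categories. If the direct picking argument fails, I would first try strengthening the invariant to allow a bounded envy-cycle swap after each key piece.
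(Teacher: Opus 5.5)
Your architecture matches the paper's: reduce to a direct sum of uniform pieces and a six-good gadget, solve the gadget with ordered no-envy, then glue pieces with the Biswas--Barman envy-graph procedure. You also guess the right gadget. However, both load-bearing steps are missing, and the gadget step as you describe it is wrong.

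\textbf{Step 2 is an aim, not an argument.} You never name an operation and check that it keeps the matroid $3$-colorable while only shrinking $\calI$. ``Merging children'' is actually unsafe, since replacing children by their union drops their individual caps and enlarges $\calI$. The missing idea is capacity tightening combined with a divisibility-based split.
\begin{itemize}
\item Add $M$ and the uncovered remainders $S\setminus\bigcup D(S)$, so that children partition each parent.
\item Lower every capacity to $\lceil |S|/3\rceil$. This is legal because $3$-colorability of a laminar matroid is equivalent to $u(S)\ge\lceil|S|/3\rceil$ for all $S$, and lowering caps only shrinks $\calI$.
\item Let $S_1\subsetneq S$ be a union of children of $S$, or any $3$-subset of a leaf, with $3\mid|S_1|$. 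Then $S$ can be replaced by $S_1$ and $S\setminus S_1$ with caps $|S_1|/3$ and $\lceil|S\setminus S_1|/3\rceil$. No independent set is gained, because these two caps sum to exactly $\lceil|S|/3\rceil$.
\item Also delete all descendants of any set of size at most $3$, delete a parent whose children's caps sum to at most its own cap, and detach size-$3$ leaves from their ancestors.
\end{itemize}
Apply these operations exhaustively and look at a lowest non-leaf node:
\begin{itemize}
\item Leaves of size at least $3$ would be extracted or split.
\item A singleton child would complete a size-$3$ union of children, so a split applies.
\item Two size-$2$ children trigger parent pruning, and four or more give a splittable size-$6$ union.
\item If the resulting size-$6$ node had a parent, it would be split off from that parent.
\end{itemize}
What remains is exactly the key matroid. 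Your ``minimal counterexample'' sentence supplies none of this. You also need a potential argument for termination, not just ``linear in $|\calL|$''.

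\textbf{Your gadget procedure fails as stated.} Take pairs $\{g_1,g_2\},\{g_3,g_4\},\{g_5,g_6\}$ and the following valuations:
\begin{itemize}
\item agent $1$ values only $g_1$ and $g_3$, at $1$ each;
\item agent $2$ values only $g_2$ and $g_5$, at $1$ each;
\item agent $3$ values $g_1,g_2,g_3,g_5$ at $1$ and $g_4,g_6$ at $0$.
\end{itemize}
Sequential favourite-feasible-bundle picking gives agent $1$ the bundle $\{g_1,g_3\}$ and agent $2$ the bundle $\{g_2,g_5\}$; its complement $\{g_4,g_6\}$ is feasible. Agent $3$ then gets $\{g_4,g_6\}$, worth $0$ to her, while either other bundle minus any good is worth $1$. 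This violates EF1.

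The fix is to commit to a partition before anyone picks. Choose three size-$2$ bundles that separate agent $3$'s top three goods and also separate agent $2$'s top two goods.
\begin{itemize}
\item Such a partition is a perfect matching between agent $3$'s top three goods and the other three goods in $K_{3,3}$, with the at most three pair edges removed.
\item The remaining graph contains a $6$-cycle, which splits into two disjoint perfect matchings. One of them avoids the edge joining agent $2$'s top two goods.
\end{itemize}
Agents then choose bundles from this fixed partition in order $\sigma$, so ordered no-envy is immediate. For EF1, the $k$-th picker ($k\in\{2,3\}$) receives a bundle worth at least her $k$-th best value. Because her top $k$ goods are separated, every other bundle minus its best good is worth at most that value.
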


First, we introduce a technique to reduce the given matroid to a simpler matroid subject to which a fair allocation can be found.
\begin{definition}[Feasibility-Preserving Operation]
    We say that an operation which takes an $n$-colorable laminar matroid $(M,\calL,u)$ as input and outputs $(M,\calL',u')$ is \emph{feasibility-preserving} if:
    \begin{enumerate}
        \item (Validity) $(M,\calL',u')$ is a laminar matroid, i.e., $\calL'$ is a laminar set family;
        \item (Rank Preservation) $(M,\calL',u')$ is $n$-colorable, i.e., the new matroid also admits a complete and feasible allocation; and
        \item (Independent Set Shrinkage) $\mathcal{I}(M, \calL',u') \subseteq \mathcal{I}(M, \calL,u)$, i.e., every independent set of the new matroid is also an independent set of the original matroid.
    \end{enumerate}
\end{definition}

The Rank Preservation condition ensures that all goods can still be allocated feasibly: because the resulting matroid remains $n$-colorable, it admits a complete and feasible allocation to the $n$ agents. The Independent Set Shrinkage condition instead concerns bundle feasibility; it ensures that any allocation that is feasible subject to the resulting matroid remains feasible subject to the original matroid. In our proof, we perform a sequence of feasibility-preserving operations on the given laminar matroid and then find an EF1 allocation subject to the resulting matroid. Completeness is unaffected because these operations retain the same ground set, and Independent Set Shrinkage guarantees that the allocation remains feasible subject to the original matroid.

In more detail, our proof for \Cref{thm:n=3} consists of four stages.

In \textbf{Stage 1}, we introduce a feasibility-preserving operation to completely \emph{pack} the laminar matroid---specifically, this ensures that for each $S \in \calL$ that has descendants, each good in $S$ appears in some descendant of $S$.

In \textbf{Stage 2}, we introduce a set of feasibility-preserving operations which are applied repeatedly to \emph{simplify} the packed matroid from Stage 1. These operations are repeated until no such operation can be applied.

In \textbf{Stage 3}, we show that the packed and simplified matroid resulting from Stage 2 consists of connected components that are either uniform matroids or a special laminar matroid over $6$ goods, which we term \emph{key matroid}. Using a result of \citet{biswas2018fair}, the general EF1 problem for laminar matroids reduces to finding an EF1 allocation with an additional ordered no-envy condition subject to just this single key matroid.

In \textbf{Stage 4}, we ``solve'' the key matroid, proving that it always admits the desired allocation regardless of the agent valuations.

Our method is constructive and results in a polynomial-time algorithm to find an EF1 allocation. Let us now understand all four stages in detail.

\subsection{Stage 1: Packing}\label{sec:packing}

\begin{definition}[Packed Laminar Set Family]
We say that a laminar set family $\calL$ over a ground set $M$ is \textbf{packed} if,
for every $S \in \calL$, either $D(S) = \emptyset$ or $\bigcup D(S) = S$.
\end{definition}

The following is a more useful characterization of the packed condition.

\begin{restatable}{lemma}{PackedChildrenPartition}\label{lem:packed-children-partition}
    A laminar set family $\calL$ is packed if and only if, for each $S \in \calL$, either $C(S) = \emptyset$ or $C(S)$ forms a disjoint partition of $S$.
\end{restatable}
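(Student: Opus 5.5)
The plan is to prove both directions directly from the definitions. Everything rests on one structural fact about children in a laminar family: \emph{every descendant of $S$ is contained in some child of $S$, and distinct children of $S$ are disjoint.}

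We begin with this fact. Fix $S \in \calL$ and take $S' \in D(S)$. Since $\calL$ is finite, the set $\{S'' \in D(S) : S' \subseteq S''\}$ is finite and nonempty, as it contains $S'$. Any inclusion-wise maximal element of it is also maximal in $D(S)$. Indeed, if some $T \in D(S)$ strictly contained it, then $T$ would also contain $S'$ and so lie in the same set, contradicting maximality. Hence $S'$ is contained in a child of $S$. For disjointness, take two distinct children $X, Y \in C(S)$. Laminarity gives $X \subseteq Y$, $Y \subseteq X$, or $X \cap Y = \emptyset$. If $X \subsetneq Y$, then $X$ is not maximal in $D(S)$, a contradiction, and symmetrically for $Y \subsetneq X$. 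The remaining option $X = Y$ is excluded by distinctness. So $X \cap Y = \emptyset$.

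From the first part of the fact we get $\bigcup C(S) = \bigcup D(S)$, since $C(S) \subseteq D(S)$ and each descendant lies inside a child. We also get $C(S) = \emptyset$ if and only if $D(S) = \emptyset$, because a nonempty $D(S)$ contains a descendant, and that descendant lies inside some child.

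Now both directions follow. ($\Rightarrow$) Suppose $\calL$ is packed and $S \in \calL$. If $D(S) = \emptyset$, then $C(S) = \emptyset$. Otherwise $\bigcup C(S) = \bigcup D(S) = S$, and the children are pairwise disjoint, so $C(S)$ is a disjoint partition of $S$. ($\Leftarrow$) Suppose the condition holds for $S$. If $C(S) = \emptyset$, then $D(S) = \emptyset$. If instead $C(S)$ partitions $S$, then $\bigcup D(S) = \bigcup C(S) = S$. Either way the packed condition holds.

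The only potentially delicate step is showing that every descendant lies inside a maximal one. This needs finiteness of $\calL$, which holds because $M$ is finite. One might also worry about empty sets in $\calL$ when reading ``partition'', but disjointness and covering are all that is used, so this causes no problem.
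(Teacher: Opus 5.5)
Your proof is correct and follows essentially the same route as the paper: both arguments take an inclusion-wise maximal descendant containing a given object to show that the children cover $\bigcup D(S)$, and both use laminarity plus maximality for disjointness. The only difference is cosmetic: you lift each descendant, rather than each good $g \in S$, to a child, and this makes explicit the finiteness-based fact $C(S)=\emptyset \iff D(S)=\emptyset$, which the paper treats as trivial.
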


\begin{proof}
    The ``if'' direction is trivial. If $C(S) = \emptyset$, then $D(S) = \emptyset$, and if $C(S)$ forms a disjoint partition of $S$, then $\bigcup D(S) \supseteq \bigcup C(S) = S$, which implies $\bigcup D(S) = S$.

    For the ``only if'' direction, take any $S \in \calL$. If $C(S) = \emptyset$, we are done. Suppose $C(S) \neq \emptyset$. First, we observe that $\bigcup C(S) = S$. Pick any $g \in S$. Since $\bigcup D(S) = S$, there exists $S' \in D(S)$ such that $g \in S'$. Pick any inclusion-wise maximal (e.g., maximum-cardinality) set in $D(S)$ that contains $g$. Then, $S' \in C(S)$. Hence, $g \in \bigcup C(S)$. Since this holds for any $g \in S$, we have $\bigcup C(S) = S$. Next, we observe that for any $S',T' \in C(S)$, $S' \cap T' = \emptyset$. The laminar structure implies that $S' \cap T' = \emptyset$, $S' \subset T'$, or $T' \subset S'$. Then, maximality of every child in $C(S)$ implies that it must be $S' \cap T' = \emptyset$.
\end{proof}

The structure in \Cref{lem:packed-children-partition} allows one to view a packed laminar set family $\calL$ as a forest, with each connected component being a rooted tree. Specifically, the sets in $\calL$ are the vertices, and for each $S \in \calL$, $C(S)$ are the \emph{children} of $S$ and $D(S)$ are the \textbf{descendants} of $S$ (i.e., leaves are precisely those $S \in \calL$ with $D(S) = \emptyset$).

Finally, we can introduce packed laminar matroids.
\begin{definition}[Packed Laminar Matroid]
We say that a laminar matroid $(M,\calL,u)$ is $n$-\emph{packed} if:
\begin{itemize}
    \item $\calL$ is packed, and
    \item $u(S) = \ceil{\nicefrac{|S|}{n}}$ for all $S \in \calL$.
\end{itemize}
\end{definition}

The intuition behind the value of $u(S)$ is as follows. It is known that $n$-colorability demands $u(S) \ge \ceil{\nicefrac{|S|}{n}}$ for all $S \in \calL$. So, if we impose the more stringent equality condition, a feasible allocation subject to that would remain feasible subject to the original matroid.

Let us now introduce the feasibility-preserving operation which can pack any laminar matroid.
\begin{definition}[Laminar Packing]
Given a positive integer $n$ and an $n$-colorable laminar matroid $(M,\calL,u)$, construct a laminar matroid $(M,\calL',u')$ as follows.
\begin{itemize}
    \item Let $\calL_0 = \calL \cup \set{M}$ and, for each $S \in \calL_0$, let $D_0(S) = \set{T \in \calL_0 : T \subsetneq S}$.
    \item Let $\calL' = \calL_0$. For each $S \in \calL_0$, if $0 < \big|\bigcup D_0(S)\big| < |S|$, then add $S^* = S \setminus \bigcup D_0(S)$ to $\calL'$.
    \item Set $u'(S) = \ceil{\nicefrac{|S|}{n}}$ for all $S \in \calL'$.
\end{itemize}
\end{definition}

The operation first places the entire ground set $M$ in the laminar family (this is done to ensure that every good from $M$ appears in the packed laminar matroid). It then adds the missing part of every set whose descendants cover some, but not all, of its elements. The following result shows that it does so while preserving feasibility.

\begin{restatable}{lemma}{PackingPreserving}\label{lem:packing-preserving}
    The laminar packing operation is feasibility-preserving.
\end{restatable}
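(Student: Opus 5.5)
We verify the three conditions in the definition of a feasibility-preserving operation in turn. Throughout, fix a complete feasible allocation $(B_1,\ldots,B_n)$ of the original $n$-colorable matroid $(M,\calL,u)$.

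\textbf{Validity.} We show that $\calL'$ is laminar. The family $\calL_0=\calL\cup\set{M}$ is laminar, since $M$ contains every set. Consider a new set $S^*=S\setminus\bigcup D_0(S)$ and any $T\in\calL_0$. There are three cases.
\begin{itemize}
\item If $T\subsetneq S$, then $T\in D_0(S)$, so $T\cap S^*=\emptyset$.
\item If $T\supseteq S$, then $S^*\subseteq T$.
\item If $T\cap S=\emptyset$, then $T\cap S^*=\emptyset$.
\end{itemize}
Now take two new sets $S^*$ and $T^*$ with $S\neq T$. If $S\cap T=\emptyset$, they are disjoint. Otherwise, without loss of generality $T\subsetneq S$, so $T\in D_0(S)$ and $S^*\cap T=\emptyset$, which gives $S^*\cap T^*=\emptyset$. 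If a set $S^*$ happens to coincide with an existing set, nothing changes.

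\textbf{Independent Set Shrinkage.} Take $I\in\calI(M,\calL',u')$ and any $X\in\calL\subseteq\calL'$. Then $|I\cap X|\le\ceil{|X|/n}$. Since $\calL$ is $n$-colorable, some color class must contain at least $\ceil{|X|/n}$ elements of $X$. Hence $u(X)\ge\ceil{|X|/n}\ge|I\cap X|$, and so $I\in\calI(M,\calL,u)$.

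\textbf{Rank Preservation.} This is the main obstacle. We must exhibit a partition of $M$ into $n$ sets, each meeting every $X\in\calL'$ in at most $\ceil{|X|/n}$ elements. The plan is to build a coloring top-down along the forest structure of $\calL'$, keeping the following invariant: for every $X\in\calL'$, the counts $|A_i\cap X|$ over $i\in[n]$ differ pairwise by at most one, i.e., the coloring is \emph{balanced} on every set of $\calL'$. Balance on $X$ is equivalent to every count lying in $\set{\lfloor|X|/n\rfloor,\ceil{|X|/n}}$, and in particular gives the required upper bound.

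We first build the structure. Augment $\calL'$ as in the packing step, so that each $S$ with descendants is partitioned by its children; for non-packed sets the added $S^*$ serves as an extra child, and if $S^*$ is not present the children already cover $S$. Process the tree from the root $M$ downward. At a node $S$ whose $n$ color counts are given and balanced, we must split them among the children $T_1,\ldots,T_k$ so that each child's counts are balanced and sum to the child's size. To do this, order the colors so that the colors with count $\ceil{|S|/n}$ come first. Then give each child $T_j$ its $|T_j|\bmod n$ extra units, drawn cyclically (round-robin) from a single ordering of colors that continues where the previous child left off. Each child receives $\lfloor|T_j|/n\rfloor$ of every color, plus one extra for a contiguous cyclic block of $|T_j|\bmod n$ colors. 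Summing over children, each color receives $\sum_j\lfloor|T_j|/n\rfloor$ plus the number of times the cyclic pointer passed it. These totals differ across colors by at most one, and the colors receiving the extra units are exactly the first $\big(\sum_j|T_j|\big)\bmod n=|S|\bmod n$ colors in the cyclic order. Starting the pointer at the colors with the larger count therefore matches the parent's counts exactly.

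At the leaves, any assignment of actual goods realizing the prescribed counts works. This yields a partition with $|A_i\cap X|\le\ceil{|X|/n}$ for all $X\in\calL'$, establishing $n$-colorability. Note that this argument never needs the original coloring $(B_1,\ldots,B_n)$ beyond its use in the Independent Set Shrinkage step. The care needed is in the cyclic-allocation bookkeeping, so that the children's extra units reproduce exactly the parent's set of large colors.
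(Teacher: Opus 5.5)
Your proof is correct. Validity and Independent Set Shrinkage are argued exactly as in the paper; for the latter you use the pigeonhole bound $u(X)\ge\ceil{|X|/n}$.

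For Rank Preservation you take a genuinely different route. The paper observes that $u'(S)=\ceil{|S|/n}$ for every $S\in\calL'$ and invokes the characterization in \Cref{lem:well-known-proof}: a laminar matroid is $n$-colorable iff $u(S)\ge\ceil{|S|/n}$ for all $S$. It proves the sufficiency direction of that characterization by citing total unimodularity of laminar incidence matrices together with de Werra's equitable-coloring theorem.

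You instead prove sufficiency directly on the tree. You push balanced color counts from $M$ down to the leaves, distributing the remainders $|T_j|\bmod n$ cyclically. Your bookkeeping checks out:
\begin{itemize}
\item Since $\sum_j(|T_j|\bmod n)\equiv|S|\pmod n$, the pointer completes some number $q$ of full rounds and then stops right after the first $|S|\bmod n$ colors.
\item Then $\sum_j\lfloor|T_j|/n\rfloor+q=\lfloor|S|/n\rfloor$, so the parent's counts are reproduced exactly.
\item Each child receives at most one extra unit per color because $|T_j|\bmod n<n$.
\end{itemize}

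The paper's route is shorter and modular. The same lemma is reused in the splitting step and applies to arbitrary laminar families without any tree bookkeeping. Yours is self-contained and elementary. It yields an explicit polynomial-time coloring and makes the two-sided balance $|A_i\cap X|\in\set{\lfloor|X|/n\rfloor,\ceil{|X|/n}}$ explicit.

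One small remark: your ``augmentation'' step is unnecessary. $\calL'$ is already packed by construction, since each added $S^*$ is a new leaf that completes the cover of $S$. Adding sets would in any case only strengthen the balance requirement, so the step is harmless.
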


\begin{proof}
    Suppose we apply laminar packing to an $n$-colorable matroid $(M,\calL,u)$ to produce a matroid $(M,\calL',u')$.

    First, we prove that $\calL'$ is a laminar set family. The intermediate family $\calL_0$ is laminar because $M$ contains every original set. Consider a missing part $S^*$ added for some $S \in \calL_0$ and any $T \in \calL_0$. If $T \subsetneq S$, then $T \in D_0(S)$, so $S^* \cap T = \emptyset$. If $S \subseteq T$, then $S^* \subseteq T$, and if $S \cap T = \emptyset$, then $S^* \cap T = \emptyset$. Thus, $S^*$ is laminar with every set in $\calL_0$. Finally, consider missing parts $S^*$ and $T^*$ added for two distinct sets $S,T \in \calL_0$. If $S$ and $T$ are disjoint, then so are $S^*$ and $T^*$. If, without loss of generality, $S \subsetneq T$, then $S \in D_0(T)$, so $T^*$ is disjoint from $S$ and hence from $S^*$. Therefore, $\calL'$ is laminar.

    Next, we prove the rank preservation condition, i.e., that $(M,\calL',u')$ is also $n$-colorable. It is well known that a laminar matroid $(M,\calL,u)$ is $n$-colorable if and only if $u(S) \geq \ceil{\nicefrac{|S|}{n}}$ for all $S \in \calL$ (for completeness, we provide a short proof of this fact in \Cref{lem:well-known-proof}). Since this condition still holds for $(M,\calL',u')$, it is also $n$-colorable.

    Finally, we prove the independent set shrinkage condition, i.e., $\calI(M,\calL',u') \subseteq \calI(M,\calL,u)$.  For contradiction, suppose there exists $I \in \calI(M,\calL',u') \setminus \calI(M,\calL,u)$. Since $I \notin \calI(M,\calL,u)$, there must exist an original set $S \in \calL$ for which $|I \cap S| > u(S)$. Every original set remains in $\calL'$, and the $n$-colorability of the original matroid implies $u(S) \ge \ceil{\nicefrac{|S|}{n}} = u'(S)$. Hence, $|I \cap S| > u'(S)$, contradicting the fact that $I \in \calI(M,\calL',u')$.
\end{proof}

\subsection{Stage 2: Simplification}\label{sec:simplification}

Once the input laminar matroid is packed in Stage 1, we apply several operations on the resulting matroid, which are not only feasibility-preserving, but also keep the resulting matroid packed.
\begin{definition}[Packing-Preserving Operation]
    We say that an operation that takes a packed laminar matroid $(M,\calL,u)$ as input and returns a laminar matroid $(M,\calL',u')$ as output is \emph{packing-preserving} if $(M,\calL',u')$ is also packed.
\end{definition}

We will introduce four such operations: \emph{splitting}, \emph{child pruning}, \emph{parent pruning}, and \emph{partition extraction}. All operations will take an $n$-colorable laminar matroid $(M,\calL,u)$ as input and return a matroid $(M,\calL',u')$. To prove that they are feasibility-preserving and packing-preserving, we will need to show that they return a laminar matroid that is packed, $n$-colorable, and satisfies $\calI(M,\calL',u') \subseteq \calI(M,\calL,u)$. Our algorithm will successively keep applying these operations until none of them can be applied anymore.

\begin{definition}[Splitting]
    If there exists an $S \in \calL$ and $S_1 \subsetneq S$ satisfying:
    \begin{enumerate}
        \item[(i)] $|S_1|$ is a positive multiple of $n$; and
        \item[(ii)] for each $C \in C(S)$, either $C \subseteq S_1$ or $C \cap S_1 = \emptyset$,
    \end{enumerate}
    then replace $S$ with two disjoint sets $S_1$ and $S_2 = S \setminus S_1$ (i.e., $\calL' = \calL \cup \set{S_1,S_2} \setminus \set{S}$). Set $u'(S_1) = \nicefrac{|S_1|}{n}$, $u'(S_2) = \ceil{\nicefrac{|S_2|}{n}}$, and $u'(T) = u(T)$ for every $T \in \calL' \setminus \set{S_1,S_2}$.
\end{definition}
In the tree form, this amounts to removing node $S$ and replacing it with the two sets $S_1$ and $S_2$ that partition it (if one of these sets already exists in $\calL$, then we simply leave it there). Every other former child $C$ of $S$ (and the subtree below it) is placed under the unique replacement set that contains it.

\begin{lemma}\label{lem:splitting}
    The splitting operation is feasibility-preserving and packing-preserving.
\end{lemma}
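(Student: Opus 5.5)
The plan is to check the three feasibility-preserving conditions and the packed condition directly, using the packed input throughout. Since the input $(M,\calL,u)$ is $n$-packed, $u(T) = \ceil{\nicefrac{|T|}{n}}$ for every $T \in \calL$. Both $S_1$ and $S_2$ are nonempty: $S_1$ because $|S_1|$ is a positive multiple of $n$, and $S_2$ because $S_1 \subsetneq S$. If $S_1$ or $S_2$ already lies in $\calL$, its prescribed capacity $u'$ equals its old capacity, because $|S_1|/n = \ceil{\nicefrac{|S_1|}{n}}$. So there is no conflict, and $u'(X) = \ceil{\nicefrac{|X|}{n}}$ for every $X \in \calL'$.

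\textbf{Validity (laminarity).} I would show that each new set $S_i$ is laminar with every $T \in \calL \setminus \set{S}$, by cases on how $T$ relates to $S$.
\begin{itemize}
\item If $T \supsetneq S$, then $T \supseteq S_i$.
\item If $T \cap S = \emptyset$, then $T \cap S_i = \emptyset$.
\item If $T \subsetneq S$, then $D(S) \neq \emptyset$, so by \Cref{lem:packed-children-partition} the set $T$ lies inside some child $C \in C(S)$. By condition (ii), $C$ lies entirely in $S_1$ or entirely in $S_2$, so $T$ is contained in one $S_i$ and disjoint from the other.
\end{itemize}
Finally, $S_1$ and $S_2$ are disjoint from each other.

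\textbf{Packedness.} Using \Cref{lem:packed-children-partition}, I would check that every $X \in \calL'$ has no children or has children partitioning $X$. Only three kinds of sets have changed children.
\begin{itemize}
\item For the former parent $P$ of $S$ (if one exists), the child $S$ is replaced by $S_1$ and $S_2$, which partition $S$, so $P$'s children still partition $P$.
\item If $C(S) = \emptyset$, then $S_1$ and $S_2$ are leaves.
\item Otherwise the children of $S_i$ are exactly the children $C \in C(S)$ with $C \subseteq S_i$. By (ii) and the partition property of $C(S)$, these partition $S_i$.
\end{itemize}
All other sets keep the same children. Combined with $u'(X) = \ceil{\nicefrac{|X|}{n}}$, the output is $n$-packed.

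\textbf{Rank preservation and shrinkage.} Rank preservation follows from the characterization (\Cref{lem:well-known-proof}) that a laminar matroid is $n$-colorable iff every capacity is at least $\ceil{\nicefrac{|X|}{n}}$, which holds with equality here. For shrinkage, take $I \in \calI(M,\calL',u')$. Every $T \in \calL \setminus \set{S}$ is still in $\calL'$ with the same capacity, so $I$ satisfies that constraint. For $S$ itself,
\[
|I \cap S| \le |I\cap S_1| + |I \cap S_2| \le \frac{|S_1|}{n} + \ceil*{\frac{|S_2|}{n}} = \ceil*{\frac{|S_1|+|S_2|}{n}} = u(S),
\]
where the equality uses that $|S_1|$ is a multiple of $n$. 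This last computation is the crux of the lemma. The main obstacle is the bookkeeping rather than any deep step: handling the cases where $S_1$ or $S_2$ already belongs to $\calL$, and where $S = M$ is a root.
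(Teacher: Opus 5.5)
Your proposal is correct and follows essentially the same route as the paper. It uses the same case analysis for laminarity (you are slightly more careful, routing a descendant $T \subsetneq S$ through the child of $S$ that contains it), the same packedness check, the colorability characterization of \Cref{lem:well-known-proof}, and the identical $\nicefrac{|S_1|}{n} + \ceil{\nicefrac{|S_2|}{n}} = \ceil{\nicefrac{|S|}{n}}$ computation for shrinkage.

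One small fix concerns the case $S_i \in \calL$ already. There $S_i$ was itself a child of $S$, so describing its children as ``the $C \in C(S)$ with $C \subseteq S_i$'' is literally wrong. Instead, note (as the paper does) that its descendants are unchanged, so it stays packed.
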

\begin{proof}
    First, we show that the resulting matroid remains laminar. Since the laminar structure holds for all the pairs of sets that were part of $\calL$, we simply need to establish it for pairs of sets in which one or both sets come from $\set{S_1,S_2}$. It holds for the $(S_1,S_2)$ pair as $S_1 \cap S_2 = \emptyset$. Take the pair $(S_1,T)$ for some $T \in \calL \setminus \set{S}$. If $T \in D(S)$, then by definition we have either $T \subseteq S_1$ or $T \subseteq S_2$ (in which case $T \cap S_1 = \emptyset$). If $S \subseteq T$, then $S_1 \subseteq T$. Finally, if $S \cap T = \emptyset$, then $S_1 \cap T = \emptyset$. Hence, the laminar structure holds for all pairs $(S_1,T)$ where $T \in \calL \setminus \set{S}$. A symmetric argument holds when replacing $S_1$ with $S_2$. Hence, $\calL'$ remains laminar.

    Next, we show that $\calL'$ remains packed. First, suppose that $S$ is a leaf. Since $S_1$ and $S_2$ are nonempty strict subsets of $S$, neither can already belong to $\calL$; otherwise, it would be a descendant of $S$. Thus, both are new leaves in $\calL'$ and satisfy packedness. Now suppose that $S$ is not a leaf, and fix $j \in \set{1,2}$. If $S_j$ already belongs to $\calL$, removing its strict ancestor $S$ does not change the descendants of $S_j$. Hence, $S_j$ retains its original packedness. If $S_j$ is new, then the children of $S$ contained in $S_j$ become descendants of $S_j$. Since the children of $S$ partition $S$, and each child is contained in exactly one of $S_1$ and $S_2$, these children cover $S_j$. Therefore, $S_j$ is packed. Finally, for every strict ancestor $T$ of $S$, replacing the descendant $S$ with $S_1$ and $S_2$ leaves the union of the descendants of $T$ unchanged because $S_1 \cup S_2=S$. Thus, $\calL'$ is packed.

    We now show that $(M,\calL',u')$ remains $n$-colorable. For every $T \in \calL' \setminus \set{S_1,S_2}$, because the input matroid is $n$-packed, we have $u'(T)=u(T)=\ceil{\nicefrac{|T|}{n}}$. Since $|S_1|$ is a multiple of $n$, we have $u'(S_1)=\nicefrac{|S_1|}{n}=\ceil{\nicefrac{|S_1|}{n}}$, while the definition of splitting gives $u'(S_2)=\ceil{\nicefrac{|S_2|}{n}}$. In particular, if either $S_j$ already belonged to $\calL$, its prescribed capacity agrees with its original capacity because the input matroid is $n$-packed. Thus, $u'(T) \geq \ceil{\nicefrac{|T|}{n}}$ for every $T \in \calL'$, so \Cref{lem:well-known-proof} implies that $(M,\calL',u')$ is $n$-colorable.

    Finally, we show independent set shrinkage: $\calI(M,\calL',u') \subseteq \calI(M,\calL,u)$. Suppose for contradiction that there exists $I \in \calI(M,\calL',u') \setminus \calI(M,\calL,u)$. Then, there must be some $T \in \calL$ such that $|I \cap T| > u(T)$. Since the only set in $\calL \setminus \calL'$ is $S$, it must be the case that $T = S$. However, since $I \in \calI(M,\calL',u')$, we have that:
    \begin{align*}
        |I \cap S_1| &\leq u'(S_1) = \nicefrac{|S_1|}{n}, \\
        |I \cap S_2| &\leq u'(S_2) = \ceil{\nicefrac{|S_2|}{n}}.
    \end{align*}
    Summing these gives:
    \begin{align*}
        |I \cap S| &= |I \cap S_1| + |I \cap S_2| \\
        &\leq \nicefrac{|S_1|}{n} + \ceil{\nicefrac{|S_2|}{n}}\\
        &=\ceil{\nicefrac{(|S_1|+|S_2|)}{n}} =\ceil{\nicefrac{|S|}{n}}.
    \end{align*}
    This contradicts $|I \cap S| > u(S)$. Crucially, the penultimate step holds because $\nicefrac{|S_1|}{n}$ is an integer.
\end{proof}

\begin{definition}[Child Pruning]
    If there exists $S \in \calL$ with $D(S) \neq \emptyset$ and $|S| \leq n$, remove all descendants of $S$ (i.e., set $\calL' = \calL \setminus D(S)$ and keep $u'(T) = u(T)$ for all $T \in \calL'$).
\end{definition}
In the tree form, this amounts to deleting the nonempty part of the tree below the non-leaf node $S$, thereby making $S$ a leaf.

\begin{lemma}\label{lem:child-pruning}
    The child pruning operation is feasibility-preserving and packing-preserving.
\end{lemma}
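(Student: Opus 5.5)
We check the four required properties in turn, as in the proof of \Cref{lem:splitting}: laminarity, packedness, $n$-colorability, and independent set shrinkage. Throughout, $(M,\calL,u)$ is the $n$-packed, $n$-colorable input, and $\calL' = \calL \setminus D(S)$ for some $S$ with $D(S)\neq\emptyset$ and $|S|\le n$.

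\emph{Laminarity and packedness.} Any subfamily of a laminar family is laminar, so $\calL'$ is laminar. For packedness, take any $T \in \calL'$ and use the definition $D(T)=\emptyset$ or $\bigcup D(T)=T$.
\begin{itemize}
    \item If $T = S$, then $S$ now has no descendants, so it is a leaf and trivially packed.
    \item If $T$ is disjoint from $S$, or $T$ lies in another branch of the tree, then none of its descendants were removed, so its packedness is unchanged.
    \item If $T \supsetneq S$, then $T$ may lose descendants lying strictly inside $S$. However, $S$ itself remains a descendant of $T$. Every removed set is a subset of $S$, so the union of the remaining descendants of $T$ is unchanged and still equals $T$.
\end{itemize}
No set of $\calL'$ is a strict subset of $S$, so these cases are exhaustive.

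\emph{Colorability.} Capacities are unchanged and every set of $\calL'$ belongs to $\calL$. The $n$-packed property gives $u'(T) = u(T) = \ceil{\nicefrac{|T|}{n}}$ for all $T \in \calL'$. By \Cref{lem:well-known-proof}, $(M,\calL',u')$ is $n$-colorable.

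\emph{Independent set shrinkage.} This is the only step that uses $|S| \le n$. Suppose $I \in \calI(M,\calL',u')$ violates some $T\in\calL$. Then $T \in D(S)$, so $T$ is nonempty and $T \subsetneq S$, which gives $u(T) = \ceil{\nicefrac{|T|}{n}} = 1$. Also $u(S) = \ceil{\nicefrac{|S|}{n}} = 1$, since $0<|S| \le n$. Because $S\in\calL'$, we have $|I \cap S| \le 1$, and therefore $|I \cap T| \le |I\cap S| \le 1 = u(T)$. This is a contradiction.

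The main obstacle is really only this last step, namely seeing that every set inside $S$ already has capacity $1$, so the constraint on $S$ implies all of them. The packedness bookkeeping for ancestors of $S$ is routine.
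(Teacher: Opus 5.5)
Your proof is correct and follows essentially the same route as the paper's. Laminarity and packedness hold because only sets strictly inside $S$ are removed while $S$ itself stays. Shrinkage holds because every nonempty set in $D(S)$, and $S$ itself, has capacity $1$, so the retained constraint on $S$ implies all the removed ones. The only cosmetic difference is that you obtain $n$-colorability from \Cref{lem:well-known-proof}, whereas the paper simply notes that deleting sets only relaxes constraints, so any coloring of the original matroid remains valid.
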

\begin{proof}
    Since this operation only removes sets, $\calL'$ remains laminar and $(M,\calL',u')$ remains $n$-colorable. Because all the descendants of $S$ are removed, $S$ becomes a leaf. For every strict ancestor $T$ of $S$, the descendant family loses the sets in $D(S)$ but retains $S$ itself, so the union of the descendants of $T$ remains unchanged. The descendants of every other remaining set are unaffected. Hence, the resulting matroid remains packed.

    Thus, we just need to show independent set shrinkage: $\calI(M,\calL',u') \subseteq \calI(M,\calL,u)$. Suppose for contradiction that there exists $I \in \calI(M,\calL',u') \setminus \calI(M,\calL,u)$. Then, there must be some $T \in \calL$ such that $|I \cap T| > u(T)$. Since $I \in \calI(M,\calL',u')$, we must have $T \in D(S)$, which implies $T \subsetneq S$. Given $|S| \leq n$, we get $0 < |T| < n$, so $u(T) = \ceil{\nicefrac{|T|}{n}} = 1$ and $u(S) = 1$. But then $|I \cap T| > 1$ also implies $|I \cap S| > 1$, contradicting $I \in \calI(M,\calL',u')$.
\end{proof}

\begin{definition}[Parent Pruning]
    If there exists $S \in \calL$ for which $C(S) \neq \emptyset$ and $\sum_{T \in C(S)} u(T) \leq u(S)$, then remove $S$ from $\calL$ (i.e., $\calL' = \calL \setminus \set{S}$ and $u'(T) = u(T)$ for all $T \in \calL'$).
\end{definition}
In the tree form, this amounts to deleting the non-leaf node $S$ and placing every child of $S$ directly under the parent of $S$; if $S$ is top-level, its children become top-level.

\begin{lemma}
    The parent pruning operation is feasibility-preserving and packing-preserving.
\end{lemma}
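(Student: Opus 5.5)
We verify the four required properties in the same order as for splitting and child pruning: laminarity, packedness, $n$-colorability, and independent set shrinkage.

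\textbf{Laminarity and colorability.} Since $\calL' = \calL \setminus \set{S}$ only removes a set, every pair of remaining sets is still laminar. The capacities of the remaining sets are unchanged, and the input is $n$-packed, so $u'(T) = \ceil{\nicefrac{|T|}{n}}$ for every $T \in \calL'$. Hence \Cref{lem:well-known-proof} gives $n$-colorability immediately.

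\textbf{Packedness.} We check every remaining set $T \in \calL'$ and split into three cases.
\begin{itemize}
    \item If $T$ is not a strict ancestor of $S$, then $D(T)$ does not contain $S$, so it is unchanged.
    \item If $T$ is a strict ancestor of $S$, then $T$ loses $S$ as a descendant. However, $S \neq \emptyset$ and $C(S) \neq \emptyset$, so by \Cref{lem:packed-children-partition} the children of $S$ partition $S$. These children remain descendants of $T$, so $\bigcup D(T)$ is unchanged.
    \item For the children of $S$ and all their descendants, nothing changes.
\end{itemize}
So $\calL'$ remains packed. In the tree picture, this is just splicing $C(S)$ up to the parent of $S$.

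\textbf{Independent set shrinkage.} This is the only step needing an argument. Suppose $I \in \calI(M,\calL',u')$ but $I \notin \calI(M,\calL,u)$. The only constraint that could be violated is the one for $S$, so $|I \cap S| > u(S)$. Since $C(S)$ partitions $S$ and every child $T \in C(S)$ is still in $\calL'$, we get
\[
|I \cap S| = \sum_{T \in C(S)} |I \cap T| \le \sum_{T \in C(S)} u(T) \le u(S).
\]
This contradicts $|I \cap S| > u(S)$, so $I \in \calI(M,\calL,u)$.

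\textbf{Main obstacle.} The key point is the use of \Cref{lem:packed-children-partition} to obtain the disjoint partition of $S$ into its children. It is needed twice: for the packedness of ancestors, and for turning the sum of the child constraints into a bound on $|I \cap S|$. Once that partition is available, the rest is routine.
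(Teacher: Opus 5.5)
Your proof is correct and follows essentially the same route as the paper. Laminarity and colorability are immediate because only $S$ is removed, ancestors of $S$ stay packed because $C(S)$ partitions $S$ by \Cref{lem:packed-children-partition}, and independent set shrinkage uses the identical summation over the children of $S$. The differences are cosmetic: you invoke \Cref{lem:well-known-proof} for colorability, whereas the paper just notes that deleting a constraint cannot hurt colorability, and you check packedness through $\bigcup D(T)$ rather than through $C(T)$.
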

\begin{proof}
    As was the case in the previous proofs, since this operation only removes sets, the resulting matroid remains laminar and $n$-colorable. It is also packed. Because $C(S) \neq \emptyset$, the packedness of the original matroid implies that $C(S)$ is a disjoint partition of $S$. If $S$ has a parent $T$, then $\bigcup C(T) = T$ continues to hold when we replace $S$ in $C(T)$ with the sets in $C(S)$. If $S$ is top-level, its children simply become top-level, and the packedness of each remaining set is unchanged.

    Thus, we just need to show independent set shrinkage: $\calI(M,\calL',u') \subseteq \calI(M,\calL,u)$. Suppose for contradiction that there exists $I \in \calI(M,\calL',u') \setminus \calI(M,\calL,u)$. Then, there must be some $T \in \calL$ such that $|I \cap T| > u(T)$. Since only $S$ was removed, we must have $T=S$, so $|I \cap S| > u(S)$. But for all $C \in C(S)$, we still have $|I \cap C| \leq u(C)$. Summing over the children, we get
    \[
        |I \cap S| = \sum_{C \in C(S)} |I \cap C| \leq \sum_{C \in C(S)} u(C) \leq u(S),
    \]
    where the first transition uses $C(S) \neq \emptyset$ and the fact that, by packedness, $C(S)$ is a disjoint partition of $S$, and the last transition comes from the definition of the operation. This is a contradiction.
\end{proof}

\begin{definition}[Partition Extraction]
    If there exists $S \in \calL$ with $D(S) = \emptyset$ (leaf), $|S| = n$, and $S \subsetneq T$ for some $T \in \calL$, remove the elements of $S$ from every strict superset of $S$ (i.e., $\calL' = \set{S} \cup \set{T \setminus S : T \in \calL \setminus \set{S}}$). Set $u'(T')=\ceil{\nicefrac{|T'|}{n}}$ for every $T'\in\calL'$.
\end{definition}
In the tree form, this operation detaches a leaf $S$ of size $n$ from its ancestors and leaves it as a singleton component in the tree. The strict-superset condition ensures that the operation changes at least one set rather than re-extracting an $S$ that is already isolated.

\begin{lemma}
    The partition extraction operation is feasibility-preserving and packing-preserving.
\end{lemma}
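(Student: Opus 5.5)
We check the four requirements in turn: laminarity, packedness, $n$-colorability and independent set shrinkage. Throughout, write $\calL' = \set{S} \cup \set{T \setminus S : T \in \calL \setminus \set{S}}$. Since the input is packed and $S$ is a leaf, every $T \in \calL$ satisfies exactly one of three cases: $T$ is a strict ancestor of $S$ ($S \subsetneq T$), $T$ is disjoint from $S$, or $T = S$. No $T \neq S$ can be a descendant of $S$. In the disjoint case, $T \setminus S = T$. Hence the only sets that change are the ancestors, each of which loses exactly the $n$ elements of $S$. If some ancestor equals $S \cup S'$ with $S' = \emptyset$, that cannot happen, since $S \subsetneq T$. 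Empty sets therefore never arise.

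\textbf{Laminarity.} Removing a fixed set $S$ from every member of a laminar family preserves containment and disjointness between any two members. So $\set{T \setminus S}$ is laminar, and $S$ is disjoint from each of these sets, so adding it back keeps the family laminar.

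\textbf{Packedness.} Fix an ancestor $T$. Its children partition it by \Cref{lem:packed-children-partition}, and exactly one child, say $C_0$, contains $S$. If $C_0 = S$, then $T \setminus S$ is covered by the remaining children. Otherwise $C_0 \setminus S$ is still in $\calL'$, and by induction down the ancestor chain it is nonempty and covered by its modified descendants. One subtle case needs care: $C_0 = S$ and $T$ has no other children. Then $T = S$, which is impossible because $S \subsetneq T$. We also note that two distinct ancestors $T_1 \subsetneq T_2$ might collapse to the same set. This only happens if $T_2 \setminus T_1 = \emptyset$, so they were equal to begin with, which is impossible. Hence the map $T \mapsto T \setminus S$ is injective on ancestors. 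Non-ancestors are untouched, so their packedness is inherited.

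\textbf{Colorability.} By definition $u'(T') = \ceil{\nicefrac{|T'|}{n}}$ for every $T' \in \calL'$, so \Cref{lem:well-known-proof} gives $n$-colorability directly.

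\textbf{Shrinkage.} This is the main step. Take $I \in \calI(M,\calL',u')$ and suppose it violates some $T \in \calL$. Since sets disjoint from $S$ and $S$ itself are unchanged with the same capacities (the input is $n$-packed, and $u'(S) = 1 = u(S)$), $T$ must be an ancestor of $S$. We then bound
\[
|I \cap T| = |I \cap S| + |I \cap (T \setminus S)| \le 1 + \ceil*{\frac{|T| - n}{n}} = \ceil*{\frac{|T|}{n}} = u(T),
\]
which is a contradiction. The key fact is that $|S|$ is exactly $n$, so that subtracting it shifts the ceiling by exactly one, mirroring the integrality trick in \Cref{lem:splitting}. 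I expect the only real obstacle to be the bookkeeping in the packedness argument, namely making sure no set becomes empty or duplicated. The strict-superset condition and $|S| = n \ge 1$ handle this.
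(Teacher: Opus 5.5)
Your proof is correct and follows essentially the same route as the paper: laminarity by removing $S$ uniformly, packedness by covering $T\setminus S$ with the transformed descendants, colorability from $u'(T')=\ceil{\nicefrac{|T'|}{n}}$ together with \Cref{lem:well-known-proof}, and shrinkage via $|I\cap S|\le 1$ plus the one-unit ceiling shift. The paper is more careful in two places you leave unargued. First, your claim that sets disjoint from $S$ are ``untouched'' needs the laminarity observation that no $T\setminus S$ with $S\subsetneq T$ can become a new strict subset of such a set $U$, since that would force $U\subseteq T\setminus S$. Second, when $T\setminus S$ coincides with such a $U$ (for instance, $C_0=S$ with exactly one other child), its packedness comes from $U$ itself, not from a ``remaining child'' lying strictly inside it.
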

\begin{proof}
    First, we show that $\calL'$ remains laminar. We need to show that for any $T'_1,T'_2 \in \calL'$, $T'_1 \cap T'_2 = \emptyset$, $T'_1 \subseteq T'_2$, or $T'_2 \subseteq T'_1$. If one of the sets is $S$, then we have an empty intersection because the elements of $S$ are removed from all other sets in $\calL'$. If neither set is $S$, then $T'_1 = T_1 \setminus S$ and $T'_2 = T_2 \setminus S$ for $T_1,T_2 \in \calL$. Since $\calL$ is laminar, we have that $T_1 \cap T_2 = \emptyset$ (in which case $(T_1 \setminus S) \cap (T_2 \setminus S) = \emptyset$), $T_1 \subseteq T_2$ (in which case $T_1 \setminus S \subseteq T_2 \setminus S$), or $T_2 \subseteq T_1$ (in which case $T_2 \setminus S \subseteq T_1 \setminus S$). Hence, the laminar structure is established for $\calL'$.

    Next, we prove that $(M,\calL',u')$ is $n$-packed. The retained set $S$ is a leaf in $\calL'$. Consider any $T'\in\calL'\setminus\set{S}$. If $T'$ coincides with an original set $U\in\calL$ disjoint from $S$, then the descendants of $U$ are unchanged: its original descendants are disjoint from $S$, while laminarity prevents $U$ from obtaining any new descendants through this operation, as such a set from $\calL$ causing a new descendant after removing $S$ would need to contain $S$ and some, but not all, of $U$. Thus, $U$ retains its original packedness. Otherwise, $T'=T\setminus S$ is genuinely new for some strict ancestor $T$ of $S$. Since $\calL$ is packed, $\bigcup_{X\in D(T)}X=T$, and the transformed descendants cover
    \[
        \bigcup_{X\in D(T)\setminus\set{S}}(X\setminus S)
        =
        \left(\bigcup_{X\in D(T)}X\right)\setminus S
        =
        T\setminus S
        =
        T'.
    \]
    Hence, $T'$ is packed, and therefore $\calL'$ is packed.

    The capacity condition for being $n$-packed holds directly from the definition of $u'$. If $T\cap S=\emptyset$, then $T'=T$ and $u'(T')=\ceil{\nicefrac{|T|}{n}}=u(T)$. If $S\subsetneq T$, then $T'=T\setminus S$, $|T'|=|T|-n$, and
    \[
        u'(T')
        =
        \ceil{\nicefrac{|T|-n}{n}}
        =
        \ceil{\nicefrac{|T|}{n}}-1
        =
        u(T)-1.
    \]
    Notably, if $T\setminus S$ coincides with an existing set $U$ from $\calL$, the packedness of the original matroid will ensure that it already has a capacity of $u(U)=u(T)-1$. Therefore, the resulting matroid is $n$-packed.

    Thus, we just need to show independent set shrinkage: $\calI(M,\calL',u') \subseteq \calI(M,\calL,u)$. Suppose for contradiction that there exists $I \in \calI(M,\calL',u') \setminus \calI(M,\calL,u)$. Since the only changed constraints correspond to strict supersets of $S$, there exists some $T \in \calL$ such that $S \subsetneq T$, $|I \cap T| > u(T)$, and $|I \cap T'| \le u'(T') = u(T)-1$ for $T' = T \setminus S$. However, this implies $|I \cap S| > 1$. But since $S \in \calL$, $(M, \calL, u)$ is $n$-packed, and $u(S) = 1$, this is impossible, completing the contradiction.
\end{proof}

\subsection{Stage 3: Decomposition}\label{sec:decomposition}
Next, we show that the laminar matroid that remains at the end of Stage 2 has a special structure: it forms a forest in which every connected component is either a uniform matroid or a special laminar matroid over $6$ goods, which we term a \emph{key matroid}. Let us formalize this.

\begin{definition}[Connected Components]
Define the set of top-level (i.e., inclusion-wise maximal) sets in $\calL$ as $\top(\calL) = \set{S \in \calL : \nexists T \in \calL, S \subsetneq T}$. Define the \emph{connected component} induced by each $S \in \top(\calL)$ to be the matroid $(M',\calL',u')$, where $M' = S$, $\calL' = \set{S} \cup D(S)$, and $u'(T) = u(T)$ for all $T \in \calL'$.
\end{definition}

Note that when $D(S) = \emptyset$, the connected component has a singleton $\calL' = \set{S}$, which induces a uniform matroid. We will prove that the only other possibility is the following.

\begin{definition}[Key Matroid]
    We say that a laminar matroid $(M,\calL,u)$ is a \emph{key matroid} if:
    \begin{itemize}
        \item $|M| = 6$,
        \item $|\calL| = 4$,
        \item $M \in \calL$ with $u(M) = 2$, and
        \item the remaining three sets in $\calL$ have size $2$ each, form a disjoint partition of $M$, and have an upper bound of $1$ each.
    \end{itemize}
\end{definition}

\begin{figure}[htb!]
    \centering
    \begin{tikzpicture}[
        level 1/.style={sibling distance=3.0cm},
        level distance=2.7cm,
        edge from parent/.style={
            draw,
            line width=0.9pt,
            rounded corners=6pt,
            -{Stealth[length=3.2mm,width=2.2mm]},
        },
        snode/.style={
            rectangle,
            rounded corners=10pt,
            draw=black,
            line width=0.9pt,
            align=center,
            font=\small,
            fill=white,
            inner sep=6pt,
            blur shadow={shadow blur steps=5,shadow blur radius=3.5pt,shadow xshift=1.2pt,shadow yshift=-1.2pt}
        },
    ]
        \node[snode] (root) {Set $M$ \\ $\{g_1, \ldots, g_6\}$ \\ $u(M)=2$}
            child { node[snode] {Set $S_1$ \\ $\{g_1, g_2\}$ \\ $u(S_1)=1$} }
            child { node[snode] {Set $S_2$ \\ $\{g_3, g_4\}$ \\ $u(S_2)=1$} }
            child { node[snode] {Set $S_3$ \\ $\{g_5, g_6\}$ \\ $u(S_3)=1$} };
    \end{tikzpicture}
    \caption{Structure of the key matroid}
    \label{fig:key_matroid_figure}
\end{figure}

\begin{lemma}\label{lem:reduction}
    Let $(M,\calL,u)$ be a $3$-packed laminar matroid. If none of the splitting, child pruning, parent pruning, or partition extraction operations can be applied to $(M,\calL,u)$, then every connected component of $(M,\calL,u)$ is either a uniform matroid or a key matroid.
\end{lemma}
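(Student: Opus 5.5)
The plan is to look at a component whose top-level set has descendants, and to show that its \emph{deepest} non-leaf node is forced to be a key matroid that is also top-level. Throughout, $n=3$ and $u(X)=\ceil{\nicefrac{|X|}{3}}$ for every $X\in\calL$, by $3$-packedness. I will also use \Cref{lem:packed-children-partition}: every non-leaf $S$ has children partitioning $S$. Since each child is a strict subset of $S$, a non-leaf always has at least two children.

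\textbf{Step 1: constraints on leaves.} Splitting may be applied to a leaf $S$ itself, since condition (ii) is vacuous when $C(S)=\emptyset$. So if a leaf had $|S|\ge 4$, we could split it with any $S_1\subsetneq S$ of size $3$. Hence every leaf has size at most $3$.

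Next, take any non-leaf $S$ and any child $C\in C(S)$ with $|C|$ a multiple of $3$. Then $S_1=C$ is a valid splitting pair for $S$: we have $C\subsetneq S$, and every child of $S$ is either equal to $C$ or disjoint from it. The same argument applies to any union of a proper subfamily of $C(S)$ whose total size is a multiple of $3$.

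From these two observations I get the following. No child of any set has size divisible by $3$. In particular, every leaf lying below some other set has size $1$ or $2$. Moreover, for every non-leaf $S$, no proper nonempty subfamily of $C(S)$ has total size $\equiv 0 \pmod 3$.

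\textbf{Step 2: the deepest non-leaf node.} Fix a component with $D(\top)\neq\emptyset$. Choose a non-leaf $S$ in it all of whose children are leaves; such an $S$ exists by finiteness. By Step 1, its $k\ge 2$ children have sizes in $\set{1,2}$. I then run a short case analysis.
\begin{itemize}
\item If $k\ge 3$ and two children have sizes $1$ and $2$, that pair sums to $3$ and is a proper subfamily, so splitting applies.
\item If $k\ge 4$, some three children share a residue, and their sizes sum to $0 \bmod 3$, so splitting applies.
\item If $k=2$, the possible size pairs are $(1,1)$, $(1,2)$ and $(2,2)$. The first two give $|S|\le 3$, so child pruning applies. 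The pair $(2,2)$ gives $\sum_{C} u(C)=2=\ceil{\nicefrac43}=u(S)$, so parent pruning applies.
\item If $k=3$, the sizes must all be equal. Sizes $(1,1,1)$ give $|S|=3$, so child pruning applies.
\end{itemize}
The only survivor is three children of size $2$ each, so $|S|=6$, $u(S)=2$, and each child has capacity $1$. This is exactly the key matroid structure.

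\textbf{Step 3: the key node is top-level.} Suppose the key node $S$ from Step 2 had a parent $P$. Then $S$ would be a child of $P$ of size $6$, a multiple of $3$, which contradicts Step 1 because splitting $P$ with $S_1=S$ would be possible. Hence $S$ is top-level, and its component consists of $S$ and its three leaf children, i.e., it is a key matroid.

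Every other component has a top-level set with no descendants, which is a uniform matroid. This proves the lemma.

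I expect the main obstacle to be Step 1. The key point is noticing that splitting applies to leaves, which kills large leaves, and to unions of children, which rules out mixed residues. Without these, configurations such as leaves of sizes $4$ and $5$ under a set of size $9$ would appear to survive all four operations.
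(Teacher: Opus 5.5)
Your proof is correct and follows essentially the same route as the paper. You bound leaf sizes, take a lowest non-leaf node, rule out every child-size configuration except three children of size $2$ (via child pruning, parent pruning, and splitting on unions of children), and then split the parent to force that node to be top-level. The only differences are minor: you remove size-$3$ leaves by splitting their parent with $S_1=C$ rather than by partition extraction, so your argument never uses that operation; and your $k\ge 4$ bullet holds only after the first bullet has removed mixed sizes, since sizes $(1,1,2,2)$ have no three children sharing a residue.
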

\begin{proof}
Fix a connected component $(M',\calL',u')$.
Assume it is neither a uniform matroid nor a key matroid; we derive a contradiction.

Throughout, since $(M,\calL,u)$ is $3$-packed, we use $u(X)=\lceil |X|/3\rceil$ for every $X\in\calL$.

\medskip
\noindent\textbf{Step 1: Bounding leaf sizes.}
Let $S\in\calL'$ be a leaf, i.e., $D(S)=\emptyset$.
\begin{itemize}
    \item If $|S|=3$, then \emph{partition extraction} applies to $S$ (since the component is not a uniform matroid, $S$ cannot simultaneously be a leaf and a top-level set, so it must have the parent set required for this operation). This contradicts the fact that no operation applies.
    \item If $|S|>3$, then \emph{splitting} applies to $S$: since $S$ is a leaf, the condition involving $C(S)$ is vacuous, and we may choose any subset $S_1\subsetneq S$ with $|S_1|=3$ (a positive multiple of $3$), resulting in a contradiction.
\end{itemize}
Hence, every leaf $S$ satisfies $|S|\le 2$.

\medskip
\noindent\textbf{Step 2: Locating key matroids.}
Choose a non-leaf set $S\in\calL'$ that is \emph{lowest} such that every $C \in C(S)$ is a leaf. Given the observation in Step 1, we have $|C| \le 2$ for all $C \in C(S)$.

\smallskip
\noindent\emph{(i) Lower bound on $|S|$.}
If $|S|\le 3$, then \emph{child pruning} applies to $S$, contradicting the fact that no operation applies. Therefore $|S|>3$.

\smallskip
\noindent\emph{(ii) No singleton child.}
Since $\calL$ is packed, $C(S)$ forms a disjoint partition of $S$ (\Cref{lem:packed-children-partition}). If some child $C\in C(S)$ has $|C|=1$, then regardless of whether there exists another child $C' \in C(S)$ with $|C'|=2$ or all other children $C' \in C(S)$ have $|C'|=1$, there would always exist a subcollection of children $\mathcal{C} \subseteq C(S)$ whose union has size $|\bigcup \mathcal{C}| = 3$ (either one $2$-set plus one $1$-set, or three $1$-sets). Let $S_1 = \bigcup \mathcal{C}$ with $|S_1|=3$. Since $S_1$ is a union of children of $S$, every child of $S$ is either contained in $S_1$ or disjoint from it. Hence \emph{splitting} applies to $S$, which is a contradiction. Thus, every child of $S$ has size exactly $2$.

Consequently, $|S|$ is even and at least $4$.

\medskip
\noindent\textbf{Step 3: Pinning down $|S|$.}
Write $|S|=2k$, where $k=|C(S)|\ge 2$.
We rule out all values except $|S|=6$.

\begin{itemize}
    \item If $|S|=4$ (so $k=2$), then $u(S)=\lceil 4/3\rceil=2$ and each child $C$ has $u(C)=\lceil 2/3\rceil=1$. Hence
    \[
        \sum_{C\in C(S)} u(C)=2 \le u(S),
    \]
    so \emph{parent pruning} applies to $S$, a contradiction.

    \item If $|S|\ge 8$ (so $k\ge 4$), pick any three children $C_1,C_2,C_3\in C(S)$ and let $S_1=C_1\cup C_2\cup C_3$. Then $|S_1|=6$ is a positive multiple of $3$, and again $S_1$ is a union of children, so \emph{splitting} applies to $S$, which is a contradiction.
\end{itemize}
Therefore, $|S|=6$ and $C(S)$ consists of exactly three children of size $2$ each.

\medskip
\noindent\textbf{Step 4: The component must be exactly the key matroid.}
If $S$ is a top-level set of the component (i.e., $S\in\top(\calL)$), then the component consists of $S$ together with its three children. With $u(S)=\ceil{\nicefrac{6}{3}}=2$ and each child having capacity $\ceil{\nicefrac{2}{3}}=1$, this is precisely the key matroid, contradicting our assumption that the component is not a key matroid.

Otherwise, $S$ has a parent $T$ in the component, i.e., $S\in C(T)$. Let $S_1:=S$. Then $|S_1|=6$ is a positive multiple of $3$, and, since $S_1$ is itself a child of $T$, every child of $T$ is either contained in $S_1$ (namely $S$) or disjoint from it (all other children). Hence \emph{splitting} applies to $T$, which is a contradiction.

Thus, we reach a contradiction in all cases, meaning that every connected component must be either a uniform matroid or a key matroid.
\end{proof}

\subsection{Stage 4: Solving the Key Matroid}\label{sec:key}

Finally, we ``solve'' the key matroid. As we will see in \Cref{sec:algo}, our final algorithm requires an EF1 allocation with an additional \emph{ordered no-envy} condition subject to the key matroid: for a given agent ordering, we need to be able to prevent envy altogether from each agent towards any agent later in the ordering.

\begin{restatable}{lemma}{EFOneKeyMatroid}\label{lem:EF1-key-matroid}
    Given $n=3$ agents with additive valuations over $m=6$ goods, a key laminar matroid, and an ordering over the agents, there always exists a complete and feasible EF1 allocation in which no agent envies any agent later in the ordering.
\end{restatable}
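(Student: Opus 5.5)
The first step is to pin down the feasible complete allocations. Write the key matroid as pairs $S_1,S_2,S_3$ of size $2$ with $u(M)=2$ and $u(S_k)=1$. Completeness with $6$ goods and $u(M)=2$ forces $|A_i|=2$ for every agent, and $u(S_k)=1$ forces the two goods of each agent to lie in distinct pairs. Counting then shows that each pair is split between exactly two agents and each agent misses exactly one pair. So a complete feasible allocation is the same thing as a bijection $\sigma$ (agent $i$ receives nothing from $S_{\sigma(i)}$) together with a choice, inside each pair, of which of its two agents gets which good.

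Since every bundle has size exactly $2$, EF1 of $i$ towards $j$ reads $v_i(A_i)\ge \min_{g\in A_j} v_i(g)$. The ordered condition for agents $a_1,a_2,a_3$ (in the given order) requires full envy-freeness of $a_1$ towards $a_2,a_3$ and of $a_2$ towards $a_3$. I would therefore build the allocation by a constrained round-robin picking sequence $a_1,a_2,a_3,a_1,a_2,a_3$. In each pick the agent takes her most valuable remaining good among those keeping the partial allocation extendable to a complete feasible one. By the structure above, this excludes the partner of a good she already holds, plus occasionally one further good needed to keep the bijection completable.

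I would then check envy in the usual round-robin way, tracking where the constraints bite.
\begin{itemize}
\item \textbf{Later towards earlier.} For $j$ after $i$, the usual argument gives EF1: $j$'s first pick is at least as good, for $j$, as $i$'s second pick. The argument still goes through because $j$'s first pick is unconstrained.
\item \textbf{Earlier towards later.} For $i$ before $j$, $i$'s first pick dominates everything $j$ ever gets. The delicate comparison is $i$'s second pick $h$ against $j$'s second pick $y$. If $y$ was available and allowed to $i$ when she chose $h$, then $v_i(h)\ge v_i(y)$ and envy-freeness follows. The bad case is when $y$ is the partner of $i$'s first good, or was forbidden to $i$ for extendability reasons.
\end{itemize}

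In the bad case I would repair the allocation by local exchanges that keep feasibility: swapping the two goods within a pair between its two holders, or rerouting the bijection $\sigma$ so that a different pair is the one $i$ misses. The point to exploit is that $a_1$'s top good $g^*$ dominates, so $a_1$ is envy-free as long as she keeps $g^*$ and her other good is her best feasible choice among goods outside the pair of $g^*$. I would first fix $a_1$'s bundle this way, as the best feasible $2$-set containing $g^*$. Then I would treat the remaining $4$ goods as a reduced two-agent problem with a residual constraint: they form one full pair plus two singletons, or two full pairs, and the residual structure is determined by which pairs $a_1$ touched. I would solve it by $a_2$ choosing her best feasible bundle and $a_3$ taking the rest, and then verify that $a_3$ is EF1 towards both earlier agents.

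The main obstacle I expect is this last verification. Agent $a_3$ may be left with a forced bundle, and I must show it is EF1 towards $a_1$ and $a_2$, and that $a_2$ is not left envious of $a_3$ under her restricted choice. This will need an explicit case split on how $a_1$'s two goods sit across the pairs relative to $a_2$'s favourite goods. Where $a_3$ would fail EF1, I would try flipping within-pair assignments between $a_2$ and $a_3$, or changing $a_2$'s choice to the other feasible bundle of near-maximal value. In each such case I would argue that the flip preserves $a_2$'s no-envy towards $a_3$, since $a_2$ only gives up a good of value at most that of the good she keeps.
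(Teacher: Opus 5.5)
Your structural facts are correct: every bundle is a $2$-set meeting two distinct key pairs, so EF1 reads $v_i(A_i)\ge\min_{g\in A_j}v_i(g)$. Your later-to-earlier argument for the constrained round-robin is also correct, since every first pick is indeed unconstrained.

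The gap is in the repair you commit to. Giving $a_1$ her best feasible $2$-set containing $g^*$ does make $a_1$ envy-free. But it can make EF1 of $a_2$ (or $a_3$) towards $a_1$ impossible. Your final verification step never checks this requirement, and no later flip between $a_2$ and $a_3$ can restore it, because $a_1$'s bundle is frozen.

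Concretely, take key pairs $\{g_1,g_2\},\{g_3,g_4\},\{g_5,g_6\}$. Let both $a_1$ and $a_2$ value $g_1$ at $10$, $g_3$ at $5$, and every other good at $0$. Your rule gives $a_1$ the bundle $\{g_1,g_3\}$. Then whatever $a_2$ receives from $\{g_2,g_4,g_5,g_6\}$ is worth $0<5=\min\{v_{a_2}(g_1),v_{a_2}(g_3)\}$ to her.

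In fact any valid allocation must put $g_1$ and $g_3$ in different bundles:
\begin{itemize}
\item if $a_1$ holds both, $a_2$ fails EF1 towards $a_1$;
\item if $a_2$ holds both, $a_1$ envies $a_2$;
\item if $a_3$ holds both, $a_2$ envies $a_3$.
\end{itemize}
So the later agents' top goods must shape the bundles before $a_1$ chooses.

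That is the missing idea. $a_1$ should not get her globally best bundle; she only needs envy-freeness, which she gets automatically by choosing first among three bundles designed in advance. The paper fixes a partition into three feasible $2$-sets in which $a_2$'s top two goods lie in different bundles and $a_3$'s top three goods lie in different bundles. Then $a_1$, $a_2$, $a_3$ choose in that order.

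Ordered no-envy is immediate from the choosing order. For EF1, $a_2$ (resp.\ $a_3$) can always still take a bundle containing one of her top two (resp.\ three) goods. Since no bundle contains two of those top goods, every other bundle minus its best good is worth at most the smallest of them.

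The partition exists for the following reason. It is a perfect matching between $a_3$'s top three goods $T$ and $M\setminus T$ that avoids the crossing key pairs; there are at most three of these, and they are pairwise disjoint. What remains of $K_{3,3}$ therefore contains a $6$-cycle. One of the $6$-cycle's two perfect matchings avoids the single extra edge given by $a_2$'s top pair.
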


\begin{proof}
    Without loss of generality, fix the agent ordering for the ordered no-envy condition such that we want agent $i$ to not envy agent $j$ for any $i < j$.

    We construct such an allocation by partitioning the goods into $3$ bundles, letting agent $1$ pick her favorite bundle, letting agent $2$ pick her favorite bundle among those remaining, and giving the final bundle to agent $3$. This picking order guarantees the ordered no-envy condition. To ensure EF1, we design the partition to satisfy the following conditions:
    \begin{enumerate}
        \item The key matroid constraints: each bundle has size $2$, and (without loss of generality) the two goods in any of the sets $\set{g_1,g_2}, \set{g_3,g_4}, \set{g_5,g_6}$ lie in different bundles.
        \item The two most favorite goods of agent $2$, say $\set{g_a,g_b}$, lie in different bundles.
        \item The three most favorite goods of agent $3$, say $\set{g_x,g_y,g_z}$, lie in different bundles.
    \end{enumerate}

    First, let us see why this ensures EF1. Since each agent receives exactly two goods, EF1 reduces to ensuring that no agent $i$ strictly prefers her less-preferred good from $A_j$ to the bundle of two goods $A_i$ that she receives. Agent $1$ already does not envy anyone. For agent $2$, her value for any bundle after removing her more valuable good in that bundle is at most $\min\set{v_2(g_a),v_2(g_b)}$. However, since $g_a$ and $g_b$ are in different bundles and she goes second in the picking order, her value for her own bundle is at least $\min\set{v_2(g_a),v_2(g_b)}$, ensuring EF1. The same reasoning applies to agent $3$: her value for her own bundle is at least $\min\set{v_3(g_x),v_3(g_y),v_3(g_z)}$, while her value for any bundle after removal of her favorite good from that bundle is at most this value.

    It remains to show that a partition of the $6$ goods satisfying the three conditions above always exists. Note that a partition into three bundles of size $2$ each can be depicted as a perfect matching in an undirected graph over the goods. The third condition---the goods in $T = \set{g_x,g_y,g_z}$ lie in different bundles---reduces to seeking this perfect matching in the bipartite graph between $T$ and $M \setminus T$. Finally, the first two conditions---the two goods in any of the sets $\set{g_1,g_2}$, $\set{g_3,g_4}$, $\set{g_5,g_6}$, and $\set{g_a,g_b}$ lie in different bundles---simply forbid the four edges between the four pairs of goods.

    Starting with the complete bipartite graph $K_{3,3}$, first remove the three edges $\set{g_1,g_2}$, $\set{g_3,g_4}$, and $\set{g_5,g_6}$. In the worst case, all three edges go across the two parts and get removed, leaving a graph with each vertex having degree $2$, which is a $6$-cycle (if not all three of those edges go across the parts, the remaining graph is a superset of a $6$-cycle). This can be decomposed into two perfect matchings, so even after removing the edge $\set{g_a,g_b}$, the perfect matching not containing this edge remains and can be selected.
\end{proof}

\subsection{Final Algorithm}\label{sec:algo}

We are now ready to present our full (polynomial-time) algorithm for computing an EF1 allocation for three agents subject to a laminar matroid. The algorithm first packs the given laminar matroid, then applies the operations from \Cref{sec:simplification}
until none of those operations can be applied anymore. \Cref{lem:reduction} shows that each connected component of the resulting matroid is either a uniform matroid or a key matroid. Our algorithm allocates the goods one connected component at a time.

In order to maintain EF1, it uses the same trick that \citet{biswas2018fair} use for computing an EF1 allocation subject to a partition matroid, where the connected components are simply uniform matroids. They maintain the property that, at any given time, in addition to the partial allocation being EF1, there is a ``no-envy order'' among the agents such that agents earlier in the order do not envy agents later in the order. Then, the next connected component is allocated using the round-robin method~\cite{CKMP+19} with the picking order being the reverse of the no-envy order. This ensures that envy does not compound and the resulting allocation remains EF1. The no-envy order property is then restored by running the envy-cycle elimination procedure of \citet{LMMS04}. The only difference in our algorithm is to use \Cref{lem:EF1-key-matroid} instead of the round-robin method when the connected component is a key matroid. The heavy lifting of our technique is already done by \Cref{lem:reduction}, which establishes the structure of the connected components, and \Cref{lem:EF1-key-matroid}, which solves the key matroid.

\begin{proof}[Proof of \Cref{thm:n=3}]
    \Cref{lem:reduction} proves that by \Cref{line:end-while} of \Cref{alg:EF1-alg}, the connected components of the matroid $(M,\calL,u)$ are uniform matroids and key matroids.

    We will first show that the allocation resulting from \Cref{alg:EF1-alg} will be a complete and feasible allocation with respect to the original laminar matroid. From our proofs that all the reduction operations used by \Cref{alg:EF1-alg} are feasibility-preserving, it is sufficient to prove that the final allocation is complete and feasible with respect to the reduced laminar matroid produced after \Cref{line:end-while}.

    First, note that, in the reduced laminar matroid produced by the algorithm, all items will be included in some connected component. Laminar packing places $M$ in the laminar family, so immediately after packing, $M$ is the unique top-level set. Each reduction operation preserves the property that the top-level sets form a disjoint partition of $M$: splitting replaces a set by two sets that partition it; child pruning retains $S$; parent pruning either leaves the top-level sets unchanged or replaces a top-level $S$ by its children, which always fully partition $S$ due to the laminar matroid being packed; and partition extraction retains $S$ while removing its goods from its strict supersets. Consequently, the connected components at the end of Stage 2 form a disjoint partition of $M$.

    From this, it is easy to see that the allocation step of the algorithm will allocate all the goods. For each key-matroid component, the procedure from \Cref{lem:EF1-key-matroid} assigns all six goods among the agents, while the round-robin procedure allocates every good in each uniform-matroid component. Finally, the envy-cycle elimination step only shifts bundles between agents; it never leaves an allocated good unassigned. Thus, the algorithm always produces a complete allocation of all goods in $M$.

    It remains to show that the final allocation is feasible with respect to the reduced laminar matroid produced in \Cref{alg:EF1-alg}. The goods in each connected component are allocated feasibly with respect to that component. Since every set in the reduced laminar family belongs to exactly one connected component, satisfying the constraints of every connected component is equivalent to satisfying the constraints of the entire laminar matroid. For each key-matroid component, feasibility follows from \Cref{lem:EF1-key-matroid}. For a uniform-matroid component over a set of goods $S \subseteq M$, no agent may receive more than $\ceil{\nicefrac{|S|}{n}}$ goods from $S$. The round-robin procedure produces a balanced allocation, so every agent respects this capacity. Finally, envy-cycle elimination only shifts fixed bundles between agents. Since all agents are subject to the same constraints, reassigning these bundles does not create a new feasibility violation. Thus, the final allocation is feasible.

    To prove that the final allocation is EF1, we inductively show that the following hold each time the algorithm reaches \Cref{line:alloc-start}:
    \begin{itemize}
        \item $A$ is EF1, and
        \item $O$ is the topological envy order of $A$, i.e., under $A$, no agent envies another agent who appears earlier in the ordering $O$.
    \end{itemize}
    Clearly, this is true in the beginning when the algorithm first arrives at \Cref{line:alloc-start} with $A=\emptyset$. Suppose it is true when the algorithm arrives at \Cref{line:alloc-start} with current allocation $A$ and appends the allocation $A^{\calC}$ of the next connected component $\calC$ to $A$.

    Note that $A^{\calC}$ is EF1 and $O$ is its reverse topological envy order. When $\calC$ is a uniform matroid, this is a well-known property of the round-robin method~\cite{CKMP+19}. When $\calC$ is a key matroid, this is established in \Cref{lem:EF1-key-matroid}. Finally, it is also known in the literature that taking the union of two EF1 allocations whose topological envy orders are the reverse of each other yields an EF1 allocation (e.g., see \citet[Lemma 2]{biswas2018fair}). Hence, $A$ is still EF1 at the end of \Cref{line:union}. Then, after applying envy-cycle elimination in \Cref{line:envy-cycle}, $A$ remains EF1 and $O$ is set to be its new topological envy order, proving the induction step.

    Finally, we can prove that \Cref{alg:EF1-alg} will terminate in polynomial time. Since $n=3$ is fixed, the four simplification operations can be detected and performed in polynomial time: the pruning and extraction operations require only scanning $\calL$, while splitting can be found by checking all nonempty proper subcollections of at most three children (or choosing any three goods when $S$ is a leaf). All remaining steps of \Cref{alg:EF1-alg} are also executable in polynomial time. Hence, it just remains to show that the main while loop of the algorithm will take polynomially many iterations prior to termination.

    It is well known that any laminar set family $\calL$ over a ground set $M$ contains at most $2|M| - 1$ nonempty sets. Therefore, for any such $\calL$, we must have that $\sum_{S \in \calL}|S| \leq |M|(2|M| - 1) \in O(|M|^2)$.

    Note that each of the operations we considered, when run against a laminar matroid $(M,\calL,u)$, can only weakly decrease $\sum_{S \in \calL}|S|$, and if an operation does not strictly decrease this value, then it strictly increases the value $|\calL|$.

    Each child pruning and parent pruning operation removes at least one nonempty set from $\calL$, strictly decreasing $\sum_{S \in \calL}|S|$. Each partition extraction operation removes the contents of its chosen set $S$ from every strict superset of $S$. Since $S$ is nonempty and one such superset must exist by definition, this decreases the sum objective as well. Finally, splitting replaces $S$ with the disjoint sets $S_1$ and $S_2$, whose sizes sum to $|S|$. If both sets are new to $\calL$, $\sum_{S \in \calL}|S|$ is unchanged while $|\calL|$ increases by one. If either set is already in $\calL$, it is not added a second time, and the sum of set sizes strictly decreases instead.

    Since $\sum_{S \in \calL}|S|$ is a nonnegative integer of value $O(|M|^2)$ and is never increased by any operation, it can strictly decrease at most $O(|M|^2)$ times. Between two consecutive strict decreases of the sum, every iteration strictly increases $|\calL|$. Since $|\calL|=O(|M|)$, there can be at most $O(|M|)$ such iterations between consecutive strict decreases. Therefore, the while loop can run for at most $O(|M|^3)$ iterations.
\end{proof}

\begin{algorithm}[H]
\caption{3-Agent EF1 for Laminar Matroids}\label{alg:EF1-alg}
\DontPrintSemicolon
\SetAlgoLined

\SetKwInput{KwInput}{Input}

\KwInput{Additive valuations $(v_1,v_2,v_3)$, $3$-colorable laminar matroid $(M,\mathcal{L},u)$}

$(M,\calL,u) \gets$ Apply laminar packing to $(M,\calL,u)$\;

\While{any operation \texttt{op} $\in$ \{splitting, child pruning, parent pruning, partition extraction\} can be applied to $(M,\mathcal{L},u)$}{
    $(M,\calL,u) \gets$ Apply \texttt{op} to $(M,\calL,u)$\;
}

$O \gets (1,2,3)$\label{line:end-while}\;
$A \gets (A_1=\emptyset,\ A_2=\emptyset,\ A_3=\emptyset)$\;

\For{each connected component $\calC$ of $(M,\calL,u)$}{\label{line:alloc-start}
    \uIf{$\calC$ is a uniform matroid}{
        $A^{\calC} \gets$ Allocate the goods in $\calC$ via the round-robin method with picking order $O$\;
    }
    \Else{
        $A^{\calC} \gets$ Allocate the goods in $\calC$ via \Cref{lem:EF1-key-matroid} with the no-envy order $O$\;
    }

    $A \gets A \cup A^{\calC}$\label{line:union}\;
    $A \gets$ Apply envy-cycle elimination of \citet{LMMS04} to $A$\label{line:envy-cycle}\;
    $O \gets$ topological order of the agents from the envy graph (so no agent envies any agent earlier in the ordering $O$)\;
}\label{line:alloc-end}

\Return $A$\;
\end{algorithm}

\section{Matroids With More Agents}\label{sec:partition-reductions}

We achieve our main result by decomposing any laminar matroid into a new matroid that is \emph{almost} a partition matroid. If there were a way to continue the decomposition process, breaking the key matroids down into uniform matroids as well, then we would no longer need our additional proofs on how to solve key matroids. We could simply use the algorithm of \citet{biswas2018fair} without any modifications on the fully decomposed matroid to get a solution. However, it is not hard to confirm that for the key matroid $(M,\calL,u)$, there does not exist any $3$-colorable partition matroid $(M,\calI)$ such that $\calI \subseteq \calI(M,\calL,u)$, so there is no way to continue this decomposition process while maintaining $3$-colorability and the Independent Set Shrinkage property, which are both vital for our proof.

Interestingly, the work of \citet{berczi2021list} also studies the technique of taking a complex matroid, reducing it to a partition matroid that meets the Independent Set Shrinkage property, and then solving their problem on that simpler partition matroid (the problem they study is list-coloring two matroids). Their reduction may increase the number of colors required by up to roughly a factor of $2$. With this relaxation, they are able to reduce their matroids to \emph{exactly} partition matroids, without any pesky key matroid-like pieces left lying around. Below we state their result formally.

\begin{restatable}{theorem}{BercziReduction}\label{thm:berczi-reduction}
    For any $n'$-colorable matroid $(M, \mathcal{I})$, there exists a $k$-colorable partition matroid $(M, \mathcal{I}')$ such that $\mathcal{I}' \subseteq \mathcal{I}$, where $k$ depends on the class of $(M,\mathcal{I})$ as follows:
    \begin{enumerate}
        \item If $(M,\mathcal{I})$ is a transversal matroid, then $k = n'$ (Theorem 1.5 of \cite{berczi2021list}).
        \item If $(M,\mathcal{I})$ is a graphic matroid, then $k = 2n' - 1$ (Theorem 1.6 of \cite{berczi2021list}).
        \item If $(M,\mathcal{I})$ is a gammoid and $n' \geq 2$, then $k = 2n' - 2$ (Theorem 1.10 of \cite{berczi2021list}). When $n'=1$, $(M,\mathcal{I})$ is the free matroid and is itself a $1$-colorable partition matroid, so we take $k=1$.
    \end{enumerate}
\end{restatable}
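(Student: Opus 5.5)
The statement collects three results of \citet{berczi2021list}, so the plan is mostly to translate their theorems into the language of this paper and handle the trivial edge case directly.

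For each class, I will first restate the cited theorem in the form it takes in \cite{berczi2021list}. There it is phrased as a list-colouring or covering statement: if a matroid of the given class can be covered by $n'$ independent sets, then there is a partition matroid whose independent sets are all independent in the original matroid and whose ground set can be covered by $k$ of its own independent sets. I will then check that this matches our notation. Their ``covering number'' coincides with our $n'$-colorability, since by the hereditary property a cover by independent sets can be refined into a partition. Their conclusion gives exactly the Independent Set Shrinkage relation $\calI' \subseteq \calI$ together with $k$-colorability of $(M,\calI')$, where $k$ is $n'$, $2n'-1$, or $2n'-2$ as appropriate. So for items 1--3 with $n'\ge 2$ the proof consists of invoking Theorems 1.5, 1.6 and 1.10 of \cite{berczi2021list} and verifying that the definitions align. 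Possible differences include loops, or whether the partition matroid is allowed parts of capacity $0$. A $0$-capacity part would only matter for loops, and loops are impossible in an $n'$-colorable matroid since every element must lie in some independent set.

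The remaining case is the gammoid with $n'=1$. There $M$ itself is independent, so by the hereditary property $\calI = 2^M$ is the free matroid. I will take $\calI' = \calI$, realized as the partition matroid with the single part $M$ and capacity $|M|$, or equivalently singleton parts of capacity $1$. This is trivially $1$-colorable and satisfies $\calI'\subseteq\calI$. I will also note why this case is separated: the formula $2n'-2$ would give $0$, which is meaningless when $M\neq\emptyset$.

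The main obstacle is not mathematical but bibliographic. I need to confirm that each cited theorem really yields a partition matroid with $\calI'\subseteq\calI$ that is $k$-colorable, rather than only a weaker list-colouring conclusion. If the cited statements only give list-colourability, I would extract the partition matroid from their proofs, which construct it explicitly as a reduction. I would also check the precise constants, for example that the graphic bound is $2n'-1$ and the gammoid bound is $2n'-2$ for $n'\ge2$, against the source before finalizing.
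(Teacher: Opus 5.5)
Your proposal is correct and matches the paper's approach. The paper justifies the statement solely by citing Theorems 1.5, 1.6, and 1.10 of \citet{berczi2021list}, and handles the $n'=1$ gammoid case by observing that a $1$-colorable matroid is the free matroid, hence itself a $1$-colorable partition matroid. Your additional checks (covers versus partitions, absence of loops, and why $2n'-2$ breaks down at $n'=1$) are sound but go beyond what the paper writes.
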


It turns out that this style of reduction plus the partition matroid algorithm of \citet{biswas2018fair} also yields some very interesting results, which we illustrate below:

\begin{restatable}{lemma}{PartitionAgentReduction}\label{lem:partition-agent-reduction}
    Let $N=[n]$ be a set of agents with additive valuation profile $\vals$. Let $(M,\mathcal{I})$ be a matroid and let $(M,\mathcal{I}')$ be an $n$-colorable partition matroid such that $\mathcal{I}' \subseteq \mathcal{I}$. Then, there exists a complete and feasible EF1 allocation of $M$ to the agents in $N$ with respect to $(M,\mathcal{I})$.
\end{restatable}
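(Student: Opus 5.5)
The plan is to reduce directly to the partition-matroid theorem of \citet{biswas2018fair}. That theorem says that for additive valuations and any partition matroid admitting a complete and feasible allocation to the $n$ agents (that is, any $n$-colorable partition matroid), a complete and feasible EF1 allocation exists and can be computed in polynomial time. So we first apply that result to the instance with the same agents $N$, the same valuation profile $\vals$, and the same ground set $M$, but with the constraint $(M,\mathcal{I}')$ in place of $(M,\mathcal{I})$. Since $(M,\mathcal{I}')$ is $n$-colorable by hypothesis, a complete and feasible allocation to $n$ agents exists under it. The theorem then yields an allocation $A=(A_1,\ldots,A_n)$ that is complete, has $A_i\in\mathcal{I}'$ for every $i\in N$, and is EF1 with respect to $\vals$.

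Next, we transfer $A$ back to the original matroid. Completeness depends only on the ground set $M$ and the agents, not on the matroid, so it carries over unchanged. Feasibility follows from the inclusion $\mathcal{I}'\subseteq\mathcal{I}$: each $A_i\in\mathcal{I}'$ also lies in $\mathcal{I}$. This is exactly the Independent Set Shrinkage idea used in \Cref{sec:section3}. EF1 is a property of the allocation and the valuations alone, with no reference to the constraint, so $A$ remains EF1. Hence $A$ is a complete and feasible EF1 allocation with respect to $(M,\mathcal{I})$.

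There is essentially no technical obstacle here. The only point needing care is to check that the hypothesis of \citet{biswas2018fair} matches ours: their theorem is stated for partition matroids under which a complete feasible allocation to the given agents exists, and $n$-colorability of $(M,\mathcal{I}')$ is precisely this condition, as noted in \Cref{sec:prelims}. It is also worth remarking that, when $(M,\mathcal{I}')$ is given explicitly, the construction is polynomial-time, because the Biswas--Barman algorithm is. Combined with \Cref{thm:berczi-reduction}, which supplies the partition matroid $(M,\mathcal{I}')$, this lemma gives the results for transversal, graphic and gammoid matroids; for graphic matroids and gammoids it requires the stronger colorability assumptions listed in \Cref{sec:results}.
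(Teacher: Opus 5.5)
Your proposal is correct and matches the paper's proof: apply the Biswas--Barman partition-matroid algorithm to the $n$-colorable partition matroid $(M,\mathcal{I}')$, then use $\mathcal{I}'\subseteq\mathcal{I}$ to transfer feasibility to $(M,\mathcal{I})$. Completeness and EF1 carry over unchanged because neither depends on the constraint.
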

\begin{proof}
    The algorithm of \citet{biswas2018fair} finds such an allocation $A$ with respect to $(M,\mathcal{I}')$. Since $\mathcal{I}' \subseteq \mathcal{I}$, every bundle in $A$ is also independent in $(M,\mathcal{I})$, so $A$ is feasible with respect to $(M,\mathcal{I})$ as well.
\end{proof}

Combining \Cref{thm:berczi-reduction} with \Cref{lem:partition-agent-reduction}, we obtain the following result:

\begin{restatable}{theorem}{ReductionResults}\label{thm:reduction-results}
    Let $N=[n]$ be a set of agents with additive valuation profile $\vals$. A complete and feasible EF1 allocation of $M$ to the agents in $N$ exists in each of the following cases:
    \begin{enumerate}
        \item $(M,\mathcal{I})$ is an $n$-colorable transversal matroid;
        \item $(M,\mathcal{I})$ is a $\lfloor(n+1)/2\rfloor$-colorable graphic matroid; or
        \item $(M,\mathcal{I})$ is a $\lfloor(n+2)/2\rfloor$-colorable gammoid.
    \end{enumerate}
\end{restatable}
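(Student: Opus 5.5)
The plan is to combine \Cref{thm:berczi-reduction} with \Cref{lem:partition-agent-reduction}. In each case I will produce an $n$-colorable partition matroid $(M,\mathcal{I}')$ with $\mathcal{I}' \subseteq \mathcal{I}$. The one extra ingredient is a monotonicity observation: a $k$-colorable matroid is $k'$-colorable for every $k' \ge k$, since we may pad the coloring with empty color classes, and $\emptyset \in \mathcal{I}'$. So it suffices to apply \Cref{thm:berczi-reduction} with $n'$ equal to the assumed colorability, and then check that the resulting $k$ satisfies $k \le n$.

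\textbf{Transversal case.} Take $n' = n$. Item 1 of \Cref{thm:berczi-reduction} gives an $n$-colorable partition matroid $\mathcal{I}' \subseteq \mathcal{I}$. \Cref{lem:partition-agent-reduction} then applies directly.

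\textbf{Graphic case.} Take $n' = \lfloor (n+1)/2 \rfloor$. We have $n' \ge 1$ since $n \ge 1$. Item 2 gives a partition matroid that is $(2n'-1)$-colorable, and $2\lfloor (n+1)/2\rfloor - 1 \le (n+1) - 1 = n$. By monotonicity it is $n$-colorable, and we conclude with \Cref{lem:partition-agent-reduction}.

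\textbf{Gammoid case.} Take $n' = \lfloor (n+2)/2 \rfloor$.
\begin{itemize}
\item If $n' \ge 2$, item 3 gives $k = 2n'-2 \le (n+2)-2 = n$.
\item If $n' = 1$, then $n \in \{0,1\}$, so effectively $n=1$. Item 3 gives $k = 1 \le n$. In this subcase one can also argue directly: the single agent receives $M$, which is independent, and EF1 is vacuous.
\end{itemize}
In both subcases monotonicity and \Cref{lem:partition-agent-reduction} finish the argument.

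There is no real obstacle here, since all the substance sits in the cited theorem and in the algorithm of \citet{biswas2018fair}. The only points needing care are the floor arithmetic in $k \le n$ and making explicit that colorability is upward-closed. That second point is what lets \Cref{lem:partition-agent-reduction}, which requires exactly $n$-colorability, be applied.
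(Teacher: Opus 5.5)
Your proposal is correct and follows essentially the same approach as the paper: apply \Cref{thm:berczi-reduction} with $n'$ equal to the assumed colorability, check $k \le n$ with the same floor arithmetic (including the $n'=1$ gammoid subcase), pass from $k$-colorability to $n$-colorability, and finish with \Cref{lem:partition-agent-reduction}. Your explicit justification that colorability is upward-closed, by padding with empty color classes, makes precise a step the paper states without comment.
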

\begin{proof}
    Apply \Cref{thm:berczi-reduction} to obtain a $k$-colorable partition matroid $(M,\mathcal{I}')$ such that $\mathcal{I}' \subseteq \mathcal{I}$. In the transversal case, take $n'=n$, which gives $k=n$. In the graphic case, take $n'=\lfloor(n+1)/2\rfloor$, which gives
    \[
        k=2\lfloor(n+1)/2\rfloor-1\leq n.
    \]
    In the gammoid case with $n\geq 2$, take $n'=\lfloor(n+2)/2\rfloor\geq 2$, which gives
    \[
        k=2\lfloor(n+2)/2\rfloor-2\leq n.
    \]
    If $n=1$, the assumed $1$-colorable gammoid is the free matroid and is itself a $1$-colorable partition matroid, so again $k=n$. Thus, in every case, $(M,\mathcal{I}')$ is $k$-colorable for some $k\leq n$, and is therefore also $n$-colorable. \Cref{lem:partition-agent-reduction} now yields the desired allocation of $M$ to the original $n$ agents with respect to $(M,\mathcal{I})$.
\end{proof}

Most notably, for $n$-colorable transversal matroids, \citet{berczi2021list} provide a perfect reduction to $n$-colorable partition matroids. Thus, if a transversal matroid is $n$-colorable, there always exists an EF1 allocation of its goods to $n$ agents.

Transversal matroids are able to model constrained fair division instances that require role fulfillment. A prime example is the fair allocation of individuals into balanced teams from a diverse talent pool. One can imagine dividing athletes into multiple sports teams: each athlete can play a specific set of positions, and every team must be constructed so that all required positions are filled by exactly one capable player. These perfectly formed teams represent the ``bases'' of a transversal matroid. Thus, using this algorithm, we would always be able to construct teams such that the coach of each team gets a collection of players they view as fair, while ensuring that all the teams are strictly functional on the field.

For graphic matroids and gammoids, \Cref{thm:reduction-results} proves a weaker version of the EF1 matroid conjecture. The population remains fixed at $n$ agents, but the theorem assumes that a graphic matroid is $\lfloor(n+1)/2\rfloor$-colorable or that a gammoid is $\lfloor(n+2)/2\rfloor$-colorable. The full conjecture would require only $n$-colorability, so resolving it for these classes requires weakening these stronger colorability assumptions.

\section{Discussion}

\paragraph{EF1 with ordered no-envy.} Our work suggests that the general approach of simplifying a given matroid into small connected components via feasibility-preserving operations and then allocating the connected components sequentially is promising for matroid-constrained fair division. However, this requires finding an EF1 allocation of each connected component that also has a directed acyclic envy graph with a given reverse topological order. Whether such allocations exist subject to all matroid constraints under which a complete allocation is feasible is a stronger question than ordinary EF1 existence. Such open questions have also been posed when seeking additional properties such as Pareto optimality~\cite{FSV20}.

\paragraph{Resolving the case of four agents.} In \Cref{app:four-agents}, we show that our technique reduces the four-agent problem under laminar matroids to finding EF1 allocations with the ordered no-envy condition in two small laminar matroids. Our efforts for a computer-aided resolution of this missing piece remained unfruitful, but this result may be within reach with a modest amount of effort, possibly even with a smarter brute-force strategy.

\paragraph{Matroid decompositions.} In \Cref{sec:partition-reductions}, we use partition decomposition bounds of \citet{berczi2021list} to establish EF1 guarantees under various classes of matroids subject to conditions stronger than $n$-colorability. \citet{berczi2021list} show that their bounds are tight, so improving them is not a viable approach to strengthening these results. However, our approach of decomposing into a partition matroid along with small other matroids seems to be a promising future avenue. In particular, can we design feasibility-preserving operations for broader classes of matroids than laminar matroids?

\section*{Acknowledgments}
This work was supported by an NSERC Discovery Grant and an NSERC-CSE Research Communities Grant. Researchers funded through the NSERC-CSE Research Communities Grants do not represent the Communications Security Establishment Canada or the Government of Canada. Any research, opinions or positions they produce as part of this initiative do not represent the official views of the Government of Canada.

\bibliographystyle{named}
\bibliography{abb,nisarg,ultimate}

\cleardoublepage

\appendix
\section*{Appendix}

\section{Omitted Proofs from \Cref{sec:section3}}\label{app:section3}

\subsection{Omitted Proofs from \Cref{sec:packing}}\label{app:packing}

\begin{lemma}\label{lem:well-known-proof}
    A laminar matroid $(M,\calL,u)$ is $n$-colorable if and only if $u(S) \geq \ceil{\nicefrac{|S|}{n}}$ for all $S \in \calL$.
\end{lemma}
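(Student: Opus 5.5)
The plan is to prove the two directions separately. Necessity is a counting argument. Sufficiency comes from an explicit coloring that distributes the goods of every set in $\calL$ as evenly as possible among the $n$ color classes.

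\emph{Necessity.} Suppose $(S_1,\ldots,S_n)$ is a partition of $M$ into independent sets. Fix $S \in \calL$. Then $|S| = \sum_{i=1}^n |S \cap S_i| \le n\,u(S)$. Since $u(S)$ is an integer, this gives $u(S) \ge \ceil{\nicefrac{|S|}{n}}$.

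\emph{Sufficiency.} I would build a linear order on $M$ in which every set of $\calL$ occupies a contiguous interval, and then color the goods cyclically. Specifically, the good in position $t$ gets color $t \bmod n$.

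1. \emph{Build the order.} View $\calL$ as a forest under inclusion; this does not need packedness, since laminarity alone gives the tree structure. Produce the order recursively. For a set $S$, list its maximal proper subsets in $\calL$ one after another, each ordered recursively. Then append the goods of $S$ not covered by any of these subsets, in arbitrary order. At top level, do the same with the maximal sets of $\calL$ and then the goods lying in no set. A straightforward induction shows every $S \in \calL$ forms a contiguous block of positions.

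2. \emph{Count within each set.} Under cyclic coloring, any block of $|S|$ consecutive positions contains each color either $\lfloor |S|/n \rfloor$ or $\ceil{\nicefrac{|S|}{n}}$ times. So for every color class $S_i$ and every $S \in \calL$, we get $|S_i \cap S| \le \ceil{\nicefrac{|S|}{n}} \le u(S)$. Hence each $S_i$ is independent, and $(S_1,\ldots,S_n)$ is an $n$-coloring.

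The only step requiring care is verifying that the recursive order makes every set contiguous. That follows because the maximal subsets of $S$ in $\calL$ are pairwise disjoint by laminarity, so concatenating their contiguous blocks with the leftover goods keeps $S$ contiguous and each of its descendants contiguous within it. An alternative route I would keep in reserve is induction on $|M|$, peeling off a leaf, but the contiguous-ordering argument is cleaner and avoids matroid union machinery.
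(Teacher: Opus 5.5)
Your proof is correct, but your sufficiency argument takes a different route from the paper's; the necessity argument is the same pigeonhole count.

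For sufficiency, the paper does not construct a coloring. It notes that the incidence matrix of a laminar hypergraph is totally unimodular and then invokes de Werra's equitable-coloring theorem for such hypergraphs. That theorem gives a partition into $n$ classes in which every edge $S$ has at most $\lceil |S|/n\rceil$ elements in any one class.

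You instead prove this special case directly:
\begin{itemize}
\item Your recursive ordering (children first, then leftover goods) represents $\calL$ as a family of intervals. The contiguity induction is sound, because the maximal proper subsets of a set in a laminar family are pairwise disjoint and every descendant lies inside one of them.
\item Cyclic coloring then puts $\lfloor |S|/n\rfloor$ or $\lceil |S|/n\rceil$ elements of each interval of length $|S|$ into each color.
\end{itemize}

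What your route buys:
\begin{itemize}
\item It is self-contained, elementary, and gives an explicit linear-time construction.
\item It also yields the matching lower bound $\lfloor |S|/n\rfloor$, i.e., the full equitable conclusion of de Werra's theorem.
\item It is essentially why laminar incidence matrices are totally unimodular in the first place: after your column reordering they have the consecutive-ones property, so they are interval matrices.
\end{itemize}

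The paper's citation-based argument is shorter on the page and applies verbatim to any hypergraph with a totally unimodular incidence matrix, but it rests on two external results.

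A minor point: the paper's proof uses no matroid-union machinery either, so your closing aside is not a real contrast with it.
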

\begin{proof}
    Necessity is simple. Clearly, to partition the elements of a set $S$ into $n$ disjoint subsets, some subset must contain at least $\ceil{\nicefrac{|S|}{n}}$ elements. Thus, if $u(S) < \ceil{\nicefrac{|S|}{n}}$, there is no way to partition the full set of items $M$ into $n$ disjoint subsets in a way that meets this constraint.

    To prove sufficiency, we appeal to a number of different theorems. First, it is well known that when a laminar set family is represented as a hypergraph (each element is a vertex and the edges are the sets in the family), then the incidence matrix of that hypergraph will be totally unimodular (see, for example,~\cite{biro2025stable}). Then, from~\cite{de1971equitable}, we know that for any hypergraph with such a property, and any $k \geq 1$, it is possible to partition the vertices of the hypergraph into $k$ disjoint sets, such that no edge $S$ of the hypergraph has more than $\ceil{\nicefrac{|S|}{k}}$ vertices in the same set. We can set $k = n$ and therefore get the existence of a valid $n$-coloring of our laminar matroid.
\end{proof}

\section{Four Agents}\label{app:four-agents}

In this section, we show which parts of our techniques can and cannot be applied to the case $n=4$ agents under laminar matroid constraints, both highlighting the versatility of our decomposition techniques and showing why EF1 under laminar matroid constraints beyond $3$ agents is a difficult problem. Recall that the packing and simplification steps in \Cref{sec:packing,sec:simplification}, respectively, hold for any number of agents. Only the steps to decompose into uniform and key matroids in \Cref{sec:decomposition} and solve the key matroid in \Cref{sec:key} are specific to the case of $n=3$. We show that similar steps can be executed for higher $n$, although it starts to get significantly trickier.

In \Cref{sec:decomposition}, we introduced the \emph{key matroid}, which was integral to our decomposition process for $3$-colorable matroids. In the case of $4$-colorable matroids, we take a similar approach. However, in this case, we must introduce two key matroids rather than one.

\begin{definition}[Size-$8$ Key Matroid]
    We say that a laminar matroid $(M,\calL,u)$ is a \emph{size-$8$ key matroid} if:
    \begin{itemize}
        \item $|M| = 8$,
        \item $|\calL| = 4$,
        \item $M \in \calL$ with $u(M) = 2$, and
        \item the remaining three sets in $\calL$ form a disjoint partition of $M$, have sizes $3$, $3$, and $2$, and each has an upper bound of $1$.
    \end{itemize}
\end{definition}

\begin{definition}[Size-$12$ Key Matroid]
    We say that a laminar matroid $(M,\calL,u)$ is a \emph{size-$12$ key matroid} if:
    \begin{itemize}
        \item $|M| = 12$,
        \item $|\calL| = 5$,
        \item $M \in \calL$ with $u(M) = 3$, and
        \item the remaining four sets in $\calL$ all have size $3$, form a disjoint partition of $M$, and each has an upper bound of $1$.
    \end{itemize}
\end{definition}

\begin{figure}[h]
    \centering
    \newcommand{\commontreescale}{0.65}

    \begin{minipage}{0.48\textwidth}
        \centering
        \scalebox{\commontreescale}{
        \begin{tikzpicture}[
            level 1/.style={sibling distance=2.8cm},
            level distance=2.7cm,
            edge from parent/.style={
                draw,
                line width=0.9pt,
                rounded corners=6pt,
                -{Stealth[length=3.2mm,width=2.2mm]},
            },
            snode/.style={
                rectangle,
                rounded corners=10pt,
                draw=black,
                line width=0.9pt,
                align=center,
                font=\small,
                fill=white,
                inner sep=6pt,
                blur shadow={shadow blur steps=5,shadow blur radius=3.5pt,shadow xshift=1.2pt,shadow yshift=-1.2pt}
            },
        ]
            \node[snode] (root) {Set $M$ \\ $\{g_1, \ldots, g_8\}$ \\ $u(M)=2$}
                child { node[snode] {Set $S_1$ \\ $\{g_1, g_2, g_3\}$ \\ $u(S_1)=1$} }
                child { node[snode] {Set $S_2$ \\ $\{g_4, g_5, g_6\}$ \\ $u(S_2)=1$} }
                child { node[snode] {Set $S_3$ \\ $\{g_7, g_8\}$ \\ $u(S_3)=1$} };
        \end{tikzpicture}
        }
        \caption{Size 8 Key Matroid}
        \label{fig:key_matroid_8}
    \end{minipage}
    \hfill
    \begin{minipage}{0.48\textwidth}
        \centering
        \scalebox{\commontreescale}{
        \begin{tikzpicture}[
            level 1/.style={sibling distance=2.8cm},
            level distance=2.7cm,
            edge from parent/.style={
                draw,
                line width=0.9pt,
                rounded corners=6pt,
                -{Stealth[length=3.2mm,width=2.2mm]},
            },
            snode/.style={
                rectangle,
                rounded corners=10pt,
                draw=black,
                line width=0.9pt,
                align=center,
                font=\small,
                fill=white,
                inner sep=6pt,
                blur shadow={shadow blur steps=5,shadow blur radius=3.5pt,shadow xshift=1.2pt,shadow yshift=-1.2pt}
            },
        ]
            \node[snode] (root) {Set $M$ \\ $\{g_1, \ldots, g_{12}\}$ \\ $u(M)=3$}
                child { node[snode] {Set $S_1$ \\ $\{g_1, \ldots, g_3\}$ \\ $u(S_1)=1$} }
                child { node[snode] {Set $S_2$ \\ $\{g_4, \ldots, g_6\}$ \\ $u(S_2)=1$} }
                child { node[snode] {Set $S_3$ \\ $\{g_7, \ldots, g_9\}$ \\ $u(S_3)=1$} }
                child { node[snode] {Set $S_4$ \\ $\{g_{10}, \ldots, g_{12}\}$ \\ $u(S_4)=1$} };
        \end{tikzpicture}
        }
        \caption{Size 12 Key Matroid}
        \label{fig:key_matroid_12}
    \end{minipage}
    \caption{Visualizations of the Size 8 and Size 12 Key Matroids.}
    \label{fig:key_matroids_appendix}
\end{figure}

Using these two key matroids, we will prove a result analogous to the $3$-colorable matroid decomposition.

\begin{lemma}\label{lem:4-agent-reduction}
    Let $(M,\calL,u)$ be a $4$-packed laminar matroid. If none of the splitting, child pruning, parent pruning, or partition extraction operations can be applied to $(M,\calL,u)$, then every connected component of $(M,\calL,u)$ is either a uniform matroid or one of the two key matroids.
\end{lemma}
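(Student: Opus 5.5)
The plan is to follow the proof of \Cref{lem:reduction} step for step, with $3$ replaced by $4$. The only real change is in the middle step. There, the ad hoc size arguments are replaced by a zero-sum argument modulo $4$ on the multiset of child sizes. Fix a connected component that is neither a uniform matroid nor one of the two key matroids, and aim for a contradiction. Throughout, $u(X)=\ceil{\nicefrac{|X|}{4}}$ for every $X\in\calL$.

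\textbf{Leaves.} Let $S$ be a leaf. If $|S|=4$, partition extraction applies, since the component is not uniform and so $S$ has a strict superset. If $|S|>4$, splitting applies with any $4$-element $S_1\subsetneq S$. Hence every leaf has size in $\{1,2,3\}$.

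\textbf{Lowest non-leaf and its children.} Choose a non-leaf $S$ all of whose children are leaves. Child pruning rules out $|S|\le 4$, so $|S|\ge 5$. Write the child sizes as a sequence $c_1,\dots,c_k$ with $c_i\in\{1,2,3\}$; by \Cref{lem:packed-children-partition} these sum to $|S|$. We have $k\ge 2$, because a single child would equal $S$. Two operations constrain this sequence:
\begin{itemize}
\item Splitting applies to $S$ whenever some \emph{proper} nonempty subcollection of children has total size divisible by $4$, since a union of children respects condition (ii). So the sequence must have no proper nonempty zero-sum subsequence in $\mathbb{Z}_4$.
\item Parent pruning applies whenever $k\le\ceil{\nicefrac{|S|}{4}}$, since every child has capacity $1$. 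So we need $k>\ceil{\nicefrac{|S|}{4}}$.
\end{itemize}
Now use the fact that the Davenport constant of $\mathbb{Z}_4$ is $4$: any $4$ elements of $\mathbb{Z}_4$ contain a nonempty zero-sum subsequence.
\begin{itemize}
\item If $k\ge 5$, any four children contain a zero-sum subcollection, which is proper. So splitting applies.
\item If $k=4$, the only zero-sum subsequence must be the whole sequence. This forces the sequence to be a minimal zero-sum sequence of maximal length, i.e.\ $g,g,g,g$ with $g$ a generator. Concretely, one can instead check directly that any pair summing to $0$, any $2$, or any mixed $1$/$3$ pattern creates a proper zero-sum. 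The candidates are all $1$'s, giving $|S|=4$, which is already excluded, or all $3$'s. The latter gives $|S|=12$ with four children of size $3$: the size-$12$ key matroid, since $u(S)=3$.
\item If $k=3$, there is no pair $\{1,3\}$ and no pair $\{2,2\}$. So the sizes lie in $\{1,2\}$ or in $\{2,3\}$, with at most one $2$. The options are $(1,1,1)$, $(1,1,2)$, $(3,3,3)$ and $(3,3,2)$. The first two have $|S|\le 4$, which is excluded. For $(3,3,3)$ we get $|S|=9$ and $\ceil{9/4}=3\ge k$, so parent pruning applies. That leaves $(3,3,2)$, i.e.\ $|S|=8$ with $u(S)=2$: the size-$8$ key matroid.
\item If $k=2$, then $\ceil{\nicefrac{|S|}{4}}\ge 2=k$ because $|S|\ge5$, so parent pruning applies.
\end{itemize}

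\textbf{Top level.} If $S$ is top-level, the component is exactly $S$ together with its children, which is one of the two key matroids. This contradicts our assumption. Otherwise $S$ has a parent $T$. Since $|S|\in\{8,12\}$ is a positive multiple of $4$ and $S$ is itself a child of $T$, splitting applies to $T$ with $S_1=S$, again a contradiction.

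The main obstacle is the $k=4$ (and $k\ge 5$) case analysis, i.e.\ certifying that no other length-$4$ size sequence avoids proper zero-sums mod $4$. I would handle it by the direct check above: a $2$ together with any other $2$, or with a $1$/$3$ pair completing it, or any $1$ next to a $3$, yields a proper zero-sum. This leaves only constant sequences, rather than citing the structure theorem for long minimal zero-sum sequences.
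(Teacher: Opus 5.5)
Your proof is correct and follows the paper's argument. It uses the same leaf bound (via partition extraction and splitting), the same lowest non-leaf $S$ with $|S|\ge 5$ by child pruning, the same use of splitting on unions of children and of parent pruning (every child has capacity $1$), and the same top-level versus parent-splitting finish. The only difference is in the middle step: you organize the child sizes by $k=|C(S)|$ using zero-sums modulo $4$ (the Davenport bound is just pigeonhole on prefix sums). The paper instead first rules out size-$1$ children and a second size-$2$ child, uses parent pruning to get $k\ge 3$, and then splits on whether a size-$2$ child exists. Your version is slightly tidier and would extend more naturally to larger $n$.
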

\begin{proof}
Fix a connected component $(M',\calL',u')$ and, for the sake of contradiction, assume that it is neither a uniform matroid nor one of the two key matroids; we derive a contradiction. Since the matroid is $4$-packed, $u(S)=\ceil{|S|/4}$ for all $S \in \calL'$.

\medskip
First, we will show that every leaf of $(M',\calL',u')$ has size at most $3$.
Let $S$ be a leaf, i.e., $D(S)=\emptyset$.
If $|S|=4$, then \emph{partition extraction} applies.
If $|S|>4$, then \emph{splitting} applies to $S$ because $S$ is a leaf, so the conditions of splitting on $C(S)$ are vacuous, and thus we can let $S_1$ be any subset of $S$ of size $4$.
Hence, every leaf satisfies $|S|\le 3$.

\medskip
Select some $S \in \calL'$ such that every $C \in C(S)$ is a leaf (such a set must exist, or else $(M',\calL',u')$ would be a uniform matroid).
$C(S)$ is a disjoint partition of $S$, and each $C \in C(S)$ has $|C| \leq 3$.
Also, note that $|S|>4$, otherwise \emph{child pruning} applies.

\medskip
Next, we show that every $C \in C(S)$ satisfies $|C| > 1$.
Suppose, for contradiction, that $C(S)$ contains a size-$1$ child. Any collection of children whose sizes sum to $4$ has a union that is a proper subset of $S$, because $|S|>4$, and therefore permits \emph{splitting}. Hence, there can be no size-$3$ child and at most one size-$2$ child. If a size-$2$ child exists, there can be at most one size-$1$ child, since a size-$2$ child together with two size-$1$ children has total size $4$. If no size-$2$ child exists, there can be at most three size-$1$ children, since four such children have total size $4$. Thus, in either case, $\sum_{C \in C(S)}|C| \leq 3$, contradicting the facts that $C(S)$ partitions $S$ and $|S|>4$. Therefore, $C(S)$ contains no size-$1$ child. The same size-$4$ argument shows that $C(S)$ contains at most one size-$2$ child, since the union of two such children would permit splitting.

With this information, we will examine exactly what forms the multiset of sizes of the sets in $C(S)$ can take.

First, note that $|C(S)| > 2$. Because $(M', \calL', u')$ is packed, we have $u'(S) \geq \ceil{5/4} = 2$, and for each $C \in C(S)$, we have $u'(C) \leq \ceil{3/4} = 1$. Thus, if $|C(S)| \leq 2$, we would have $u'(S) \geq 2 \geq \sum_{C \in C(S)}u'(C)$, and the parent pruning operation would apply.

Since $|C(S)| > 2$, and we know that there can be no sets of size $1$ and at most a single set of size $2$, there must be at least two sets of size $3$.

Let $k_3 \geq 2$ be the number of size-$3$ sets in $C(S)$ and $k_2\in\{0,1\}$ indicate whether there is a size-$2$ set in $C(S)$.
Then $|S|=3k_3+2k_2$.

\smallskip
\emph{Case A: $k_2=0$ (all children have size $3$).}
If $k_2 = 0$, then by the fact that $|C(S)| \geq 3$, we must have $k_3 \geq 3$.
If $k_3=3$, then $|S|=9$ and $u'(S)=\ceil{9/4}=3$, while $\sum_{C \in C(S)} u'(C) = 3$, so \emph{parent pruning} would apply.
If $k_3\ge 5$, then the union of any four children has size $12$ (a positive multiple of $4$) and is a proper subset of $S$,
so \emph{splitting} applies. Hence the only possibility is $k_3=4$, i.e., $|S|=12$ with four size-$3$ children.

\smallskip
\emph{Case B: $k_2=1$ (one size-$2$ child and the rest size $3$).}
If $k_3\ge 3$, then two size-$3$ children plus the size-$2$ child have union size $8$ (a positive multiple of $4$),
which is a proper subset of $S$, so \emph{splitting} applies. Hence the only possibility is $k_3=2$, i.e., $|S|=8$ with child sizes $(3,3,2)$.

Thus $S$ is either (i) size $12$ with four children of size $3$, or (ii) size $8$ with children of sizes $(3,3,2)$.

If $S$ is top-level in the component, then $S$ and its children form one of the two key matroids, contradicting our assumption. Otherwise, $S$ has an immediate parent $T$, so $S \in C(T)$. Since $|S| \in \set{8,12}$ is a positive multiple of $4$, we may apply splitting to $T$ with $S_1 = S$: the child $S$ is contained in $S_1$, while every other child of $T$ is disjoint from $S_1$. This is a contradiction. Thus, we reach a contradiction in all cases.
\end{proof}

Using \Cref{lem:4-agent-reduction}, we can decompose $4$-colorable laminar matroids in the same way that we decomposed $3$-colorable ones. If we could find ways to ``solve'' the two $4$-colorable key matroids (i.e., find an EF1 allocation of the goods to $4$ agents that also has the ``no ordered envy'' property), then we could combine these results with \Cref{alg:EF1-alg} to prove the desired guarantee for $4$ agents under all $4$-colorable laminar matroids. In the lemma below, we show that this is not as straightforward as in the $3$-agent case, as the technique we use to ``solve'' the $3$-colorable key matroid no longer works.

\begin{lemma}
    It is not always possible to construct a complete and feasible allocation of the goods in the size-$8$ laminar matroid to $4$ agents such that for each agent $i \in N$, that agent's top $i$ favorite goods all appear in different bundles.
\end{lemma}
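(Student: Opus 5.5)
The plan is to exhibit an explicit valuation profile on the size-$8$ key matroid for which no complete and feasible allocation separates the required top goods. This works because the matroid is so rigid that the shape of every allocation is forced by counting. First I pin down that shape, then I choose the top goods of agents $2$, $3$, $4$ so that the good $g_7$ has no allowed partner.

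\textbf{Step 1 (structure of allocations).} Since $u(M)=2$ and $|M|=8$ with $4$ agents, every bundle has exactly two goods. Each of $S_1,S_2,S_3$ has capacity $1$, so every bundle takes its two goods from two different parts. Let $a,b,c$ be the numbers of bundles of types $S_1S_2$, $S_1S_3$ and $S_2S_3$. Counting the goods of each part gives
\[
a+b=3,\qquad a+c=3,\qquad b+c=2 .
\]
Hence $b=c=1$ and $a=2$. In particular, in every complete and feasible allocation, one of $g_7,g_8$ is paired with a good of $S_1$ and the other with a good of $S_2$.

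\textbf{Step 2 (choice of valuations).} I take additive valuations with the following top goods:
\begin{itemize}
\item agent $4$: top four goods are $\set{g_1,g_2,g_3,g_7}$;
\item agent $3$: top three goods are $\set{g_4,g_5,g_7}$;
\item agent $2$: top two goods are $\set{g_6,g_7}$.
\end{itemize}
Agent $1$ is arbitrary. For example, agent $4$ values these four goods at $10$ and all others at $1$, and similarly for agents $3$ and $2$. Small perturbations remove ties if strict top sets are wanted.

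\textbf{Step 3 (contradiction).} Agent $4$'s condition puts $g_1,g_2,g_3,g_7$ in four distinct bundles, so $g_7$ is not paired with any good of $S_1$. By Step 1, $g_7$ must then be in the unique $S_2S_3$ bundle, paired with one of $g_4,g_5,g_6$. Pairing with $g_4$ or $g_5$ violates agent $3$'s condition, and pairing with $g_6$ violates agent $2$'s condition. So no such allocation exists.

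The only real obstacle is Step 1, the rigidity of bundle types, which the short counting argument settles. The rest is choosing the three top-sets so that they jointly forbid every possible partner of $g_7$.
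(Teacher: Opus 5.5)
Your counterexample is correct and uses the same idea as the paper. Agent $4$'s top four goods force each bundle to contain exactly one of them, and the part capacities together with the lower-indexed agents' separation conditions then leave some good ($g_7$ for you, $g_3$ in the paper) with no admissible bundle-mate. The paper's instance is slightly leaner, using only agents $3$ and $4$ with top sets $\{g_1,g_2,g_5,g_6\}$ and $\{g_3,g_5,g_6\}$. Your bundle-type count in Step 1 is not needed: bundles having size two and $g_7$ being unable to share a bundle with $g_8$ already suffice.
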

\begin{proof}
    By the constraints imposed by the size-$8$ key matroid, we must have that:
    \begin{itemize}
        \item $|A_i| = 2$ for all $i \in N$.
        \item For any set $S \in \{\{g_1,g_2,g_3\}, \{g_4,g_5,g_6\}, \{g_7,g_8\}\}$, no more than a single good from $S$ can appear in any bundle.
    \end{itemize}

    Assume that agent $4$'s top $4$ favorite goods are $\{g_1,g_2,g_5,g_6\}$ and agent $3$'s top $3$ favorite goods are $\{g_3,g_5,g_6\}$ (the top goods of agents $1$ and $2$ are arbitrary).

    From the constraint induced by agent $4$'s top $4$ favorite goods, we know that in any valid allocation, each bundle contains exactly one of the goods in $\{g_1,g_2,g_5,g_6\}$. From the constraints of the key matroid, $g_3$ cannot appear in the same bundle as $g_1$ or $g_2$. Further, from the constraint induced by agent $3$'s top $3$ favorite goods, $g_3$ cannot appear in the same bundle as $g_5$ or $g_6$. However, this means that $g_3$ cannot appear in any bundle.
\end{proof}

This shows that we cannot use the same proof technique we used in the $3$-colorable case in order to solve these larger key matroids, but it still leaves open the question of whether an allocation with the desired properties always exists.

\end{document}